\newtheorem{theorem}{Theorem}
\newtheorem{lemma}{Lemma}
\newtheorem{proposition}{Proposition}
\newtheorem{remark}{Remark}
\newtheorem{definition}{Definition}
\newtheorem{assumption}{Assumption}
\numberwithin{theorem}{section}
\numberwithin{lemma}{section}
\numberwithin{corollary}{section}
\numberwithin{proposition}{section}
\numberwithin{remark}{section}
\numberwithin{definition}{section}
\newcommand{\bs}[1]{\boldsymbol{#1}}
\newcommand{\spann}{{\rm span}}
\newcommand{\supp}{{\rm supp}}
\newcommand{\SL}[1]{\stackrel{\circ}{#1}}
\newcommand{\dist}{{\rm dist}}
\title{Quantum walks on graphs embedded in orientable surfaces}
\author{Yusuke Higuchi$^1$, Etsuo Segawa$^2$\\
$^1${\small Department of Mathematics,
Gakushuin University,} \\
{\small Tokyo, 171-8588, Japan.}
\\
$^2${\small Graduate School of Environment and Information Sciences, Yokohama National University,}\\ {\small Hodogaya, Yokohama, 240-8501, Japan.
}
}
\date{}
\begin{document}

\maketitle

\par\noindent
{\bf Abstract}. 
 \\
A quantum walk model which reflects the $2$-cell embedding on the orientable closed surface of a graph in the dynamics is introduced. We show that the scattering matrix is obtained by finding the faces on the underlying surface which have the overlap to the boundary and the stationary state is obtained by  counting two classes of the rooted spanning subgraphs of the dual graph on the underlying embedding.  \\

%abst
%\footnote[0]{
\noindent{\it Key words and phrases.} 
 %key wards 
Quantum walk;
$2$-cell embedding on orientable surface; 
Spanning subgraph; 
Stationary state; 
Scattering matrix.
%\\
%\;\quad{\it MSC2010 }

%}

\section{Introduction}
The convergence to the stationary state in the study on irreducible and ergodic random walks on finite graphs is very fundamental to explore a lot of interesting phenomena, for examples, cut off phenomenon~\cite{Diaconis,LevinPeres} and the relation to the electric circuit~\cite{DS}. Quantum walk has been introduced by a quantum analogue of random walks~\cite{Gudder, Meyer} and it exhibits several properties such as effectiveness on the quantum search algorithm on finite graphs~(see \cite{Ambainis2003,Childs,Portugal} and its references therein) and the coexistence of ballistic and localization~\cite{KonnoBook, HKSS} on the infinite lattices. Due to the unitarity of the time evolution of a quantum walk, every eigenvalue lives on the unit circle in the complex plain. Such a fact induces the quasi-periodicity to every quantum walk in general~\cite[Theorem~7.7]{Portugal}. Thus we need to device something to obtain the stationary state in the construction of the quantum walk. The quantum walk models on semi-infinite graphs which accomplish the convergence to a fixed point as a dynamical system in the long time limit have been introduced in \cite{FelHil1,FelHil2}. 
In this model, the total time evolution is driven by an infinite dimensional unitary operator on this semi-infinite system  while the restricted time iteration to the internal graph can be regarded as a finite dynamical system receiving the inflow and radiating the outflow to the outside at every time step. The outside corresponds to the semi-infinite paths whose rooted vertices are identified with the boundaries of the original graph. These semi-infinite paths are called the tails. In addition, the time evolution of the quantum walk on the tails is free, that is, every quantum walker from the outside moves inward and one from the inside moves outward.    
It is mathematically shown that by the balance between such inflow and outflow, this dynamics converges to a fixed point~\cite{FelHil1,FelHil2,HS}.  
In \cite{SchSmi, SevTan}, the relation between discrete-time quantum walk models to the stationary Shr{\"o}dinger equation on the metric graph, say the quantum graph~\cite{ExSe, Kuch, Albe}, are clarified. 
The quantum walk model introduced by \cite{FelHil1} has also deeply related to the quantum graph and has been studied from the view point of the scattering theory on the Schr{\"o}dinger equation~\cite{MMOS,Morioka,KHiguchi} and also considered the fermionic system~\cite{AndEtAL}.  
In this paper, we treat such a quantum walk model on the graph with tails receiving the inflow and radiating the outflow~\cite{FelHil1,FelHil2,HS}. 

The time evolution operator of a quantum walk is determined by the pair of the underlying graph $G=(V,E)$, whose degree is locally finite, and the sequence of local unitary matrices assigned at each vertex $\{C_u\}_{u\in V}$. 
Here $C_u$ is a $\deg(u)$-dimensional unitary operator, where $\deg(u)<\infty$ is the degree of the vertex $u\in V$. The unitary matrix assigned at vertex $u$ is called the coin matrix of $u$.  
%Moreover, if we consider the quantum walk model which has a fixed point proposed by \cite{}, then the choice of the boundary vertices which are the ``gateways" to the outside. 
One of the typical choices of $C_u$'s are $C_u=2/\mathrm{deg}(u)\;J_u-I_u$ for any $u\in V$, where $J_u$ and $I_u$ are the all $1$ matrix and the identity matrix with dimension $\deg(u)$; such a quantum walk model is called the Grover walk. The useful property for the construction of the Grover walk is that the dynamics is uniquely determined by only the degree of each vertex which is independent of the labeling of arcs because the Grover matrix commutes with any permutation matrix~\cite{KroBru}. 
Thus it is sometimes possible to extract some graph geometries induced by the Grover walk~\cite{HSS2}. 
On the other hand, of course, there are infinite many choices of the local unitary matrices $\{C_u\}$ other than the Grover matrix. We are interested in the choice of quantum coins so that the total time evolution reflects not only the graph adjacency relation but also the way of the drawing the graph, in particular, {\it its underlying closed surface} to draw the graph without any crossings of edges. To this end, we consider the graph with the {\it rotation}. The rotation $\rho$ is the set of cyclic permutations assigned at each vertex whose length is the degree of each vertex. It is well known that the rotation determines the configuration of the vertices on the underlying oriented surface for the $2$-cell embedding of the graph~\cite{GT,MT,NakamotoOzeki}. Every facial closed walk, which is a closed walk on the graph naturally determined by the rotation, gives a region (in other word, a face) in the underlying $2$-cell embedded surface. Thus by Euler's theorem, the genus of the underlying orientable surface induced by the rotation $\rho$ is described by
\begin{equation}\label{eq:genus} 
g= \frac{1}{2}(b_1(G)-r(G;\rho)+1), 
\end{equation}
where $r(G;\rho)$ is the number of facial walks and $b_1(G)=|E|-|V|+1$ is the Betti number. 
See Section~\ref{sect:rotation} for the definition of the facial walk.
Up to the choice of rotation $\rho$, the genus of the orientable surface can be controlled. The rotation giving the maximal number of the closed facial walks provides the minimal genus, $\gamma(G)$, of the underlying $2$-cell embedding. On the other hand, the rotation giving the minimal number of the closed facial walks provides the maximal genus, $\gamma_M(G)$, of the underlying $2$-cell embedding. There are nice books on surveying the embedding of the graphs; for example, \cite{GT,MT} and \cite{NakamotoOzeki}. 
The following is quite interesting and useful theorem for the genus of the underlying orientable closed surface which is known as Duke's interpolation theorem (1966) (see \cite{MT} and its reference therein and see also Figure~\ref{fig:embedding} for an example):
%%%
\begin{theorem}[Duke's interpolation theorem]
For any connected graph $G$, and for any natural number $g$ with 
$\gamma(G)\leq g \leq \gamma_M(G)$, 
there exists a $2$-cell embedding of $G$ in the orientable closed surface with the genus $g$. 
\end{theorem}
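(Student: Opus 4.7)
My plan is to exploit the formula \eqref{eq:genus}, which reduces the problem to an interpolation statement about the number of facial walks: it suffices to produce, for every integer $r$ with $r(G;\rho_M) \le r \le r(G;\rho_m)$ (where $\rho_m$ and $\rho_M$ realize the minimum and maximum genus respectively), a rotation $\rho$ with $r(G;\rho)=r$. The standard route is to connect $\rho_m$ and $\rho_M$ by a sequence of elementary moves, and to control the change in $r$ caused by each move.

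First, I would define an \emph{elementary rotation move} at a vertex $v$ to be the operation that interchanges two consecutive entries in the cyclic permutation at $v$ (leaving the rotations at all other vertices unchanged). A short combinatorial argument shows that any two rotations of $G$ can be transformed into one another by a finite sequence of such elementary moves, since the symmetric group on each $\deg(v)$-element set is generated by adjacent transpositions.

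The heart of the proof is the following key lemma: if $\rho'$ is obtained from $\rho$ by a single elementary move at $v$, interchanging the positions of two arcs $e_1$ and $e_2$ outgoing from $v$, then $|r(G;\rho)-r(G;\rho')|\le 1$. To see this, I would trace through how the facial walk structure is modified: only those facial walks passing through the corner at $v$ between $e_1$ and $e_2$ are affected, and a direct case analysis (two distinct facial walks fusing into one, or a single facial walk splitting into two, or no change at all) yields the bound. Equivalently, via \eqref{eq:genus} this says the genus changes by at most $1/2$ in absolute value, but since the genus is an integer (the Betti number $b_1(G)$ and the number of faces have the same parity after each move), the genus changes by $0$ or $\pm 1$. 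This parity bookkeeping is the step I expect to require the most care.

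Given the key lemma, the theorem follows immediately from a discrete intermediate value argument: along the finite sequence of elementary moves transforming $\rho_m$ into $\rho_M$, the associated genus sequence $g_0=\gamma(G),g_1,g_2,\dots,g_N=\gamma_M(G)$ consists of integers with consecutive differences in $\{-1,0,+1\}$, so every integer between $\gamma(G)$ and $\gamma_M(G)$ appears as some $g_i$. Taking the corresponding rotation $\rho_i$ produces the desired $2$-cell embedding of $G$ in the orientable closed surface of genus $g$.
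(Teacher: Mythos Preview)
The paper does not prove Duke's interpolation theorem; it merely quotes it as a classical result with a reference to \cite{MT}. So there is no ``paper's own proof'' to compare against, and your proposal must stand on its own.

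Your overall strategy---connect two rotation systems by elementary moves and apply a discrete intermediate value argument to the resulting genus sequence---is indeed the classical route. However, your key lemma is misstated, and the paragraph deriving the genus bound from it is internally inconsistent. You claim that an adjacent transposition at a vertex changes the number $r$ of facial walks by at most~$1$. But your own parity remark shows this cannot be right: since $g=\tfrac12(b_1(G)-r+1)$ is always an integer and $b_1(G)$ is fixed, the parity of $r$ is the same for \emph{every} rotation system. Hence $|r-r'|\le 1$ would force $r=r'$ for every elementary move, and the genus could never change at all.

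The correct statement is that an adjacent transposition changes $r$ by an element of $\{-2,0,+2\}$, so that the genus changes by $0$ or $\pm 1$. One clean way to see this: in the dart model with vertex permutation $\sigma$ and edge involution $\alpha$, the faces are the cycles of $\varphi=\sigma\alpha$. Swapping two adjacent darts $b,c$ at $v$ replaces $\sigma$ by $\tau\sigma\tau$ with $\tau=(b\;c)$, hence $\varphi$ by $\tau\,\varphi\,\tau'$ where $\tau'=\alpha\tau\alpha=(\alpha(b)\;\alpha(c))$. Multiplying by each transposition changes the cycle count by $\pm 1$, giving a net change in $\{-2,0,+2\}$. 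Your case analysis (``two faces fuse / one face splits / no change'') only tracks the single corner between $e_1$ and $e_2$, but the swap actually alters \emph{three} corners, which is why two merge/split events can occur.

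With the bound corrected to $|r-r'|\le 2$ (equivalently $|g-g'|\le 1$), the intermediate value argument you give in the last paragraph goes through unchanged and yields the theorem.
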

%%%
\begin{figure}[hbtp]
    \centering
    \includegraphics[keepaspectratio, width=160mm]{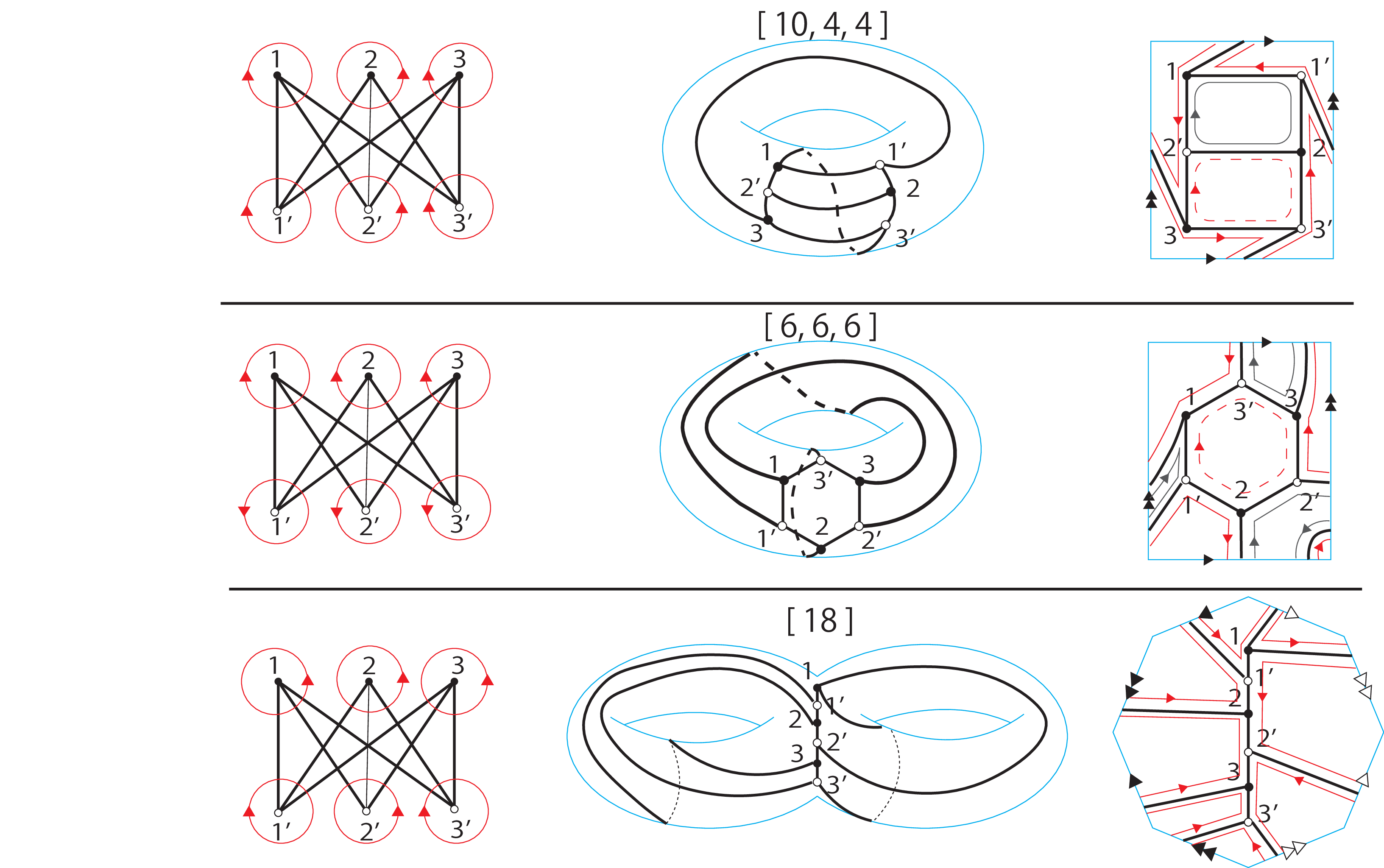}
    \caption{The underlying closed surface of $K_{3,3}$: It is known that the minimum genus of the underlying orientable closed surface for $K_{3,3,}$ is $\gamma(K_{3,3})=1$. On the other hand, since the minimum number of the facial walks is $1$ (see the right bottom of the figure), the maximum number of that is $\gamma_M(K_{3,3,})=2$ by (\ref{eq:genus}). Three kinds of rotations are depicted (the first column); we call them $[10,4,4]$-type, $[6,6,6]$-type and $[18]$-type, respectively. The orientable underlying surface and the $2$-cell embedding of $K_{3,3}$ are determined by the rotation (the second column): the genuses of the underlying surfaces induced by the first and second rotations, $[10,4,4]$ and $[6,6,6]$ types, are $1$, but their embedding ways are different because there are $2$-square faces and $1$-decagon faces in the first embedding while there are $3$ hexagon faces in the second embedding (the third column). On the other hand, the third rotation, $[18]$-type, needs the maximal genus $\gamma_M(K_{3,3})=2$ of the underlying orientable surface. 
    The way to assign the vertices on the closed surface is as follows: (i) on each $v\in V$, arrange radially the edges connected to $v$ following the rotation $\rho_v$; (ii) connect the vertices so that on each edge, the rotations of the two end vertices must be opposite direction~\cite{MT}. }
    \label{fig:embedding}
\end{figure}

Then in this paper, we treat a quantum walk model which can  reflect the underlying $2$-cell embedding in the orientable surface of the graph to the dynamics, named the {\it facial quantum walk}. 
Originally, the facial quantum walk model was called the {\it optical quantum walk} for the purpose of the design of an implementation of a quantum walk model which has a fixed point as a dynamical system by the optical polarizing elements~\cite{Mizutani_ETAL}. 
The polarization is represented by blowing up the original graph~\cite{Mizutani_ETAL}.  
In addition we notice such a potential of this quantum walk in the description of an underlying $2$-cell embedding of this quantum walk model from the above famous theorem.  As an interesting related work, the quantum walk with the permutated Grover coins following the edge coloring of the planner graph is constructed and the dark state is described by the expression of the underlying edge coloring~\cite{MaresNovotonyJex}. 

The dynamics of the facial quantum walk is determined by the abstract graph, the rotation which is determined by the $2$-cell embeding on the orientable surface, the place of the tails and the $2\times 2$ unitary matrix $H$. 
In this paper, we describe the  scattering matrix in the stationary state of the facial quantum walk on the blow up graph with tails (see Theorem~\ref{thm:scattering}). 
Interestingly, the scattering matrix can be obtained by only the information on the {\it external} facial walks which pass through the boundary vertices. To extract the internal graph geometry, we consider the stationary state restricted to the internal graph. 
Then we obtain the expression for the stationary state on $G^{BU}$ 
described by counting the typical spanning subgraphs of the dual graph $G^*$ induced by the underlying $2$-cell embedding $(G,\rho)$, and the boundary of the graph (see Theorem~\ref{thm:stationary}).  

This paper is organized as follows. 
In section~2, for a give graph $G$ with a rotation $\rho$, we give the definition of the blow up graph $G^{BU}$ which is the digraph both of whose in-degree and out-degree are $2$. The graph contains the information about the embedding of $G$ into some orientable surface and implements an optical polarizing circuit~\cite{Mizutani_ETAL}. Moreover we introduce the definition of quantum walk model on a blow up graph $G^{BU}$, the facial quantum walk on $G^{BU}$. 
%The local dynamics of this quantum walk is determined by the rotation assigned at each vertex of $G$, however the global dynamics is determined by the orientable closed surface induced by these rotations.
In section~3, we show that the scattering matrix is expressed by using the facial walk having the overlap to the boundary (Theorem~\ref{thm:scattering}). 
We demonstrate the scattering matrices on the three kinds of $2$-cell embedding on orientable surface of $K_{3,3}$ as an example, and propose an idea to detect the underlying embedding way of the graph by just observing the scattering. 
In section~4, we show that the stationary state on $G^{BU}$ can be described by using the closed facial walks whose length fit the frequency of the determinant of the quantum coin.
We demonstrate that our quantum walk can be controlled to choose the faces of the truncated icosahedron which has the overlap to the stationary state so that it colors the soccer ball pattern by adjusting the determinant of the quantum coin. We also show that the stationary state on $G^{BU}$ can be described by the number of rooted spanning trees of the dual graph $G^*$ of $(G,\rho)$ which is induced by the embedding on the orientable closed surface (Theorem~\ref{thm:stationary}). 
Finally, we compute the stationary state on the tetrahedron as an example. 
%%%%%%%%%%%%%%%
\section{Setting}
\subsection{Blow up of rotation graph}\label{sect:rotation}
Let $G=(V,A)$ be a finite, connected, simple and symmetric digraph. Here the ``symmetric" means that  for any arc $a\in A$, there uniquely exists an inverse arc $\bar{a}$ in $A$. The boundary of $G$, $\delta V$, is a subset of $V$. 
We connect the semi-infinite paths, called the tails, to the boundary identifying the origin of each tail with each boundary vertex. 
Here the vertex and the arc sets of the tails are denoted by $V_{tl}$, $A_{tl}$, respectively. 
The set of the ``tips" of the tails are defined by $\delta A=\delta A^+ \cup \delta A^-$ with $\delta A^+=\{ e\in A_{tl} \;|\; t(e) \in \delta V \}$, $\delta A^-=\{ e\in A_{tl} \;|\; o(e) \in \delta V \}$, respectively. 
Let $\tilde{N}_u$ be the neighborhood of $u$ in the resulting infinite graph. 
Note that if $u\in \delta V$, then $|\tilde{N}_u|=d_G(u)+1$, where $d_G(u)$ is the degree of $u$ in the original graph $G$, since the boundary vertex has a neighbor of the tail.

We also introduce newly important notion as follows. 
We assign a cyclic permutation $\rho_u: \tilde{N}_u\to \tilde{N}_u$ with length $|\tilde{N}_u|$ to each vertex $u\in V$.  
It is natural to construct the map $\rho: A\cup \delta A^+\to A\cup \delta A^-$ induced by $\{\rho_u\}_{u\in V}$ as follows:  for any $e\in A\cup \delta A^+$, 
\[ o(\rho(e))=t(e) \text{ and } t(\rho(e))=\rho_{t(e)}(o(e)). \]
Then the triple of $\tilde{G}:=(G;\delta V;\rho)$ is called the rotation tailed graph. See Figure~\ref{fig:1}. 

In this paper, we focus on its deformed graph $G^{BU}$; we call it the {\it blow up graph} of $\tilde{G}$. 
The blow up graph is obtained by replacing each original vertex of the internal of $\tilde{G}$ with the oriented cycle induced by $\rho$. 
Let us explain it more precisely as follows. 
Let $V^{BU}$ and its subset $\delta V^{BU}$ be denoted by 
\begin{align*} 
& V^{BU} = \{(u,v)\;|\; {u\in V},\;{v\in \tilde{N}_u}\}, \\
& \delta V^{BU} = \{ (t(e),o(e)) \;|\; e\in\delta A^{+} \},
\end{align*}
respectively. The set of arcs $A^{BU}\subset V^{BU}\times V^{BU}$ is denoted by 
\[ e\in A^{BU}\text{ with $o(e)=(u,v),\;t(e)=(u',v')\in V^{BU}$} \Leftrightarrow\; 
      \begin{cases}
      \text{(i)} & u=u' \text{ and } v'=\rho_u(v) \\
      &\text{\qquad or} \\
     \text{(ii)} & u=v' \text{ and } v=u'.  
      \end{cases} \]
For a finite graph $(V^{BU},A^{BU})$, the blow up graph $G^{BU}$ is obtained by 
connecting a tail to each vertex of $\delta V^{BU}$, which is an infinite graph. See Figure~\ref{fig:1}.
Here let us use the same notation with that of $\tilde{G}$ to denote the set of arcs of such tails by $A_{tl}$ and let us define the special subset of $A_{tl}$ as ``piers" of the blow up graph $G^{BU}$ by  
\[\delta A_{pr}=\delta A_{pr}^{+}\cup \delta A^{-}_{pr}\]
with $\delta A^+_{pr}=\{ e\in A_{tl} \;|\; t(e) \in \delta V^{BU} \}$, $\delta A^-_{pr}=\{ e\in A_{tl} \;|\; o(e) \in \delta V^{BU} \}$ 
for the blow up graph $G^{BU}$. 
Then the vertex and arc sets of $G^{BU}$ is described by
\begin{align}
V(G^{BU}) &= V^{BU} \cup V_{tl}, \\
A(G^{BU}) &= A^{BU} \cup A_{tl}.
\end{align} 
%, that is, 
%\[\delta A_{tl}^+ =\{e\in A_{tl} \;|\; t(e)\in \delta V^{BU}\},\; \delta A_{tl}^- =\{e\in A_{tl} \;|\; o(e)\in \delta V^{BU}\}. \] 
%Each tail is naturally connected to each vertex of \[\delta V^{BU}:=\{ (u,v)\in V^{(BU)} \;|\; u\in \delta V, v\in V(tail)  \}. \]

%
\begin{figure}[hbtp]
    \centering
    \includegraphics[keepaspectratio, width=120mm]{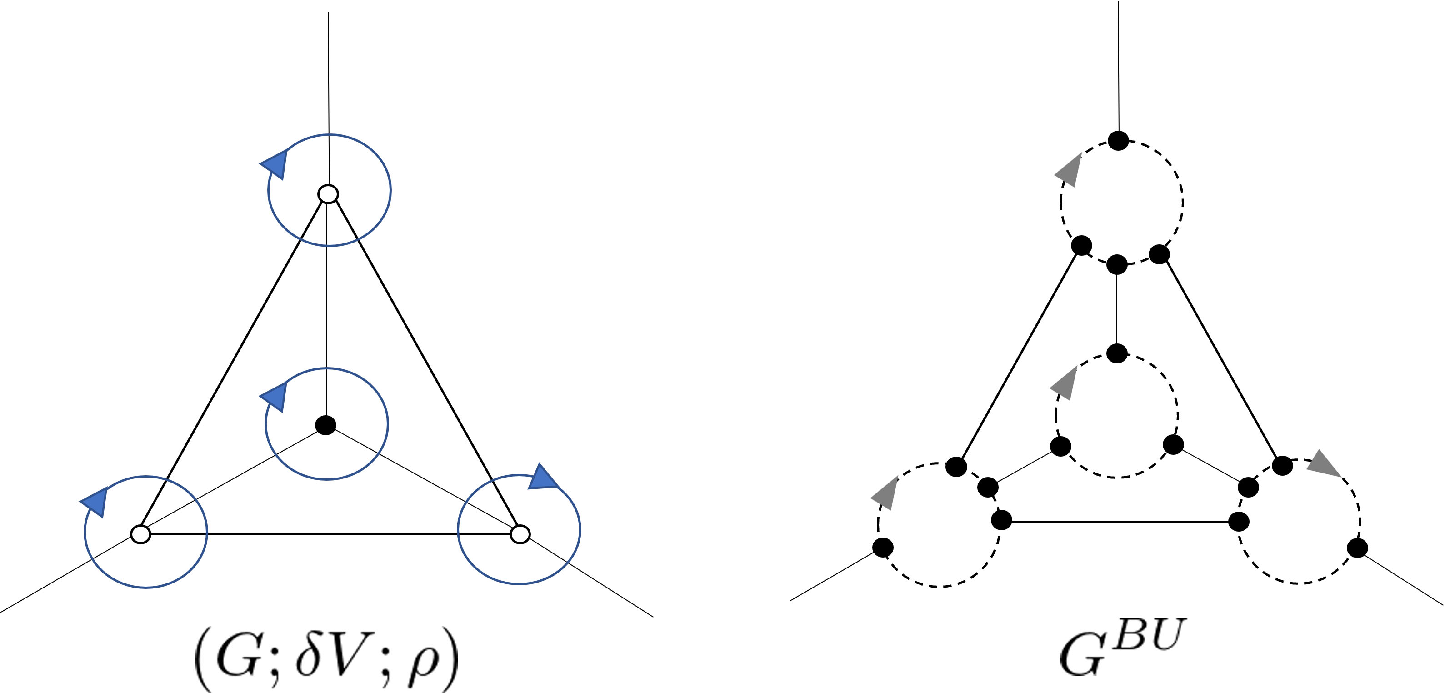}
    \caption{A rotation tailed graph $(G;\delta V;\rho)$ and its blow up graph $G^{BU}$: In the figure of $(G;\delta V;\rho)$, the white vertices are $\delta V$ and each clockwise circle describes the rotation on each vertex. In the figure of $G^{BU}$, the dotted lines are the arcs in $A_{is}$ while the real lines are  the arcs in $A_{br}$.}
    \label{fig:1}
\end{figure}
\begin{figure}[hbtp]
    \centering
    \includegraphics[keepaspectratio, width=80mm]{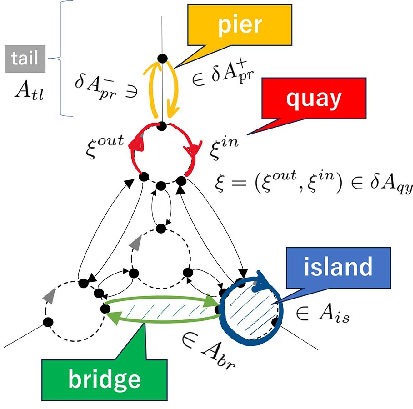}
    \caption{The names of arcs of $G^{BU}=(V^{BU},A^{BU})$. The {\it island} $A_{is}$ is the set of the oriented cycles induced by blowing up the vertices of the original graph. The {\it bridge} $A_{br}$ is the set of arcs which is isomorphic to the symmetric arc set of the original graph. The set of $A_{tl}$ is the set of arcs of the tails. The subset $\delta A_{pr}\subset A_{tl}$ called the {\it pier} is the set of arcs whose terminal or origin vertices belongs to the internal graph. The {\it quay} $\delta A_{qy}$ is the pair of island arcs $(\xi^{out},\xi^{in})$, where $t(\xi^{out})=o(\xi^{in})$ is connected to a tail. }
    \label{fig:names}
\end{figure}
The set $A^{BU}$ can be decomposed into $A_{is}$ and $A_{br}$, where $A_{is}$ is induced by the condition~$(i)$, which is the set of arcs on the oriented cycles by blowing up the original vertices, while $A_{br}$ is induced by the condition $(ii)$, which is isomorphic to the symmetric arc set of the internal of $\tilde{G}$. 
The set of the arcs obtained by blowing up the original vertex $u\in V$ is called the {\it island} of $u$, and denoted by $A_{is}(u)$. 
Each arc in $A_{br}$, which is called the {\it bridge}, connects between two islands. 
Then the arc set of $G^{BU}$ can be decomposed into 
\[ A(G^{BU})=A^{BU}\sqcup A_{tl}=A_{is} \sqcup A_{br} \sqcup A_{tl}. \]
The set of ``quays" of the blow up graph $G^{BU}$ is defined by
\[\delta A_{qy}:=\{(\xi^{out},\xi^{in})\in A_{is}\times A_{is} \;|\; t(\xi^{out})=o(\xi^{in}) \in \delta V^{BU}\}. \] 
The sets of piers $\delta A_{pr}$ and quays $\delta A_{qy}$ of $G^{BU}$ 
will be useful to describe the scattering on the surface of this quantum walk. 
See Figure~\ref{fig:names}.

Note that each vertex in $V^{BU}$ is a crossing point between each arc of $A_{is}$ and $A_{br}$; the in-degree and out-degree are $2$.
 The original motivation for such a graph deformation is an implementation of quantum walks on graphs by optical polarizing elements. At each vertex of $G^{BU}$, the incoming arcs along the island and the bridge can correspond to horizontal and vertical polarizations, respectively. The matrix  $H$, which will be  introduced in the next subsection, represents the local scattering of the polarization waves and plays the role of the half wave plate. 
 See \cite{Mizutani_ETAL} for more detail.  
%%%%
\subsection{Facial quantum walk}
For each $u\in V$, we set the $2$-dimensional unitary matrix represented by  
\[H=\begin{bmatrix}a & b \\c& d\end{bmatrix},  \] 
which corresponds to the half wave plate in one of the optical elements in \cite{Mizutani_ETAL}. 
This matrix will express the local scattering at each vertex of $V^{BU}$ at each time step of the facial quantum walk.  
We assume $abcd\neq 0$ to avoid a trivial walk. 
We also assume $d\in \mathbb{R}$, and this assumption will lead a useful property of the stationary state for reflecting some interesting underlying graph structures:    
\begin{assumption}
$abcd\neq 0$ and $d\in \mathbb{R}$.
\end{assumption}
\noindent We set $\omega:=-\det H$, which will be an important factor to describe the stationary state of our quantum walk. Remark that $|\omega|=1$ by the unitarity of $H$. 
\begin{definition}[Facial quantum walk]
\noindent
\begin{enumerate}
    \item The total state space: $\mathbb{C}^{A^{BU}\cup A_{tl}}$. 
    \item Time evolution: 
    Let $\Psi_n\in \mathbb{C}^{A^{BU}\cup A_{tl}}$ be the $n$-th iteration of the quantum walk with $\Psi_{n+1}=U\psi_n$. 
    The unitary time evolution operator $U: \mathbb{C}^{A^{BU}\cup A_{tl}}\to \mathbb{C}^{A^{BU}\cup A_{tl}}$ is defined as follows.
    Let us set the arcs of a tail $e_0,e_1,\dots$ and $\bar{e}_0,\bar{e}_1,\dots$ with $t(e_0)\in \delta V^{BU}$, $t(e_{j+1})=o(e_j)$ $j=0,1,2,\dots$.  
    The time evolution is the tail is free; that is,  
\begin{align}
    \Psi_{n+1}(e_{j}) &= \Psi_n(e_{j+1}), \label{eq:tail1}\\
    \Psi_{n+1}(\bar{e}_{j+1}) &= \Psi_n(\bar{e}_j), \label{eq:tail2}
\end{align}
for any $j=0,1,2,\dots$. 
On the other hand, the time evolution in the internal graph is described as follows. 
    Let $e_{+}^{is}, e_{-}^{is}\in A_{is}$ and $e_{+}^{br},e_{-}^{br}\in A_{br}\cup A_{tl}$ be the arcs with $t(e_{+}^{is})=t(e_{+}^{br})=o(e_{-}^{is})=o(e_{-}^{br})$. %Assume $e_{\pm}^{is}$ belongs the island of $u\in V$.  
Then  
\begin{align}\label{eq:internalTE}
    \begin{bmatrix}
    \Psi_{n+1}(e^{is}_-) \\ \Psi_{n+1}(e^{br}_-)
    \end{bmatrix}
    = H 
    \begin{bmatrix}
    \Psi_{n}(e^{is}_+) \\ \Psi_{n}(e^{br}_+)
    \end{bmatrix}.
\end{align}
\item The initial state: Let the set of the tails be $\{Tail_1,\dots,Tail_\kappa\}$. Here $\kappa=|\delta V^{BU}|$. 
The amplitudes of the inflow from the tails are set as the sequence of the complex numbers $\alpha_1,\dots,\alpha_\kappa\in \mathbb{C}$. 
The initial state $\Psi_0\in \mathbb{C}^{A^{BU}\cup A_{tl}}$ is denoted by 
\begin{align}\label{eq:initialstate}
    \Psi_0(e) = 
    \begin{cases}
    \alpha_j & \text{: $e\in A(Tail_j)$, $\dist(t(e),G)<\dist(o(e),G)$, $(j=1,\dots,\kappa)$,}\\
    0 & \text{: otherwise.}
    \end{cases}
\end{align}
\end{enumerate}
\end{definition}
By the time evolution in the internal graph (\ref{eq:internalTE}), the elements of $H$, $a,b,c,d$, can be regarded as the weights associated with the moving from $A_{is}$ to $A_{is}$, from $\bar{A}_{is}$ to $A_{is}$,  
from $A_{is}$ to $\bar{A}_{is}$,  
and from $\bar{A}_{is}$ to $\bar{A}_{is}$, respectively.  
Here $\bar{A}_{is}$ is the complement of $A_{is}$, that is, $\bar{A}_{is}=A_{br}\cup A_{tl}$. 
By the setting of the initial state~(\ref{eq:initialstate}) and the free dynamics on the tail given in (\ref{eq:tail1}) and (\ref{eq:tail2}), a quantum walker penetrates from the tails at every time step, which is regarded as the inflow to the internal graph, on the other hand, once a quantum walker goes out to the tails, it never goes back to the internal graph, which is regarded as the outflow from the internal graph. 
It is shown that $\Psi_n(e)$ converges to a stationary state for any $e\in A(G^{BU})$ in the long time limit $n\to\infty$~\cite{HS}. 
\begin{theorem}[Theorem~3.1 in \cite{HS}]
Let $\Psi_n$ be the $n$-th iternation of the quantum walk. Then we have 
\begin{align*}
\exists \lim_{n\to\infty}\Psi_n(a) &=:\Psi_\infty(a) \text{ pointwise for any $a\in A^{BU}\cup A_{tl}$;}\\
U\Psi_\infty &= \Psi_\infty. 
\end{align*}
\end{theorem}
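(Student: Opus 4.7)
The plan is to proceed by a guess-and-verify strategy. First I would exhibit a candidate stationary state $\Psi_\infty$ solving $U\Psi_\infty=\Psi_\infty$ with the prescribed incoming amplitudes $\alpha_1,\dots,\alpha_\kappa$ on the tails, and then show that the transient $\Phi_n:=\Psi_n-\Psi_\infty$ tends to $0$ pointwise. Since $\Psi_0$ and $\Psi_\infty$ must take the same constant value $\alpha_j$ on every incoming arc of tail $j$, the difference $\Phi_0$ carries no inflow, so $\Phi_n=U^n\Phi_0$ is a homogeneous orbit driven only by the boundary leaking into the tails.

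To construct $\Psi_\infty$, I would first use the free-shift rules (\ref{eq:tail1}) and (\ref{eq:tail2}) to force $\Psi_\infty$ to be constant along each tail: the incoming side equals $\alpha_j$ and the outgoing side equals a single unknown outflow value $\beta_j$ expressible through the local $H$-scattering at the corresponding pier. Substituting this into $U\Psi_\infty=\Psi_\infty$ on the internal graph reduces the problem to a finite linear system $(I-E)\Psi_\infty^{\mathrm{int}}=B\alpha$, where $E$ is the ``chopped'' operator obtained from $U$ by discarding the pier-outflow and $B\alpha$ encodes the inflow injected through the piers. Unitarity of $U$ yields $\|E\psi\|^2=\|\psi\|^2-\|\text{outflow at piers}\|^2$, so $E$ is a finite-dimensional contraction. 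Iterating the internal time-evolution from $\Psi_0|_{\mathrm{int}}=0$ gives $\Psi_n|_{\mathrm{int}}=\sum_{k=0}^{n-1}E^kB\alpha$, and once $I-E$ is invertible one obtains $\Psi_\infty^{\mathrm{int}}=(I-E)^{-1}B\alpha$ together with the identity $\Psi_n|_{\mathrm{int}}-\Psi_\infty^{\mathrm{int}}=-E^n\Psi_\infty^{\mathrm{int}}$.

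Pointwise convergence on the tails is then a corollary: for each fixed outgoing tail arc $\bar{e}_k$, the free-shift rule gives $\Phi_n(\bar{e}_k)=\Phi_{n-k}(\bar{e}_0)$ for $n\ge k$, and $\Phi_n(\bar{e}_0)$ is a fixed $H$-linear combination of internal amplitudes at the corresponding pier. Thus decay on outgoing tails follows from decay on internal arcs, which in turn reduces to $E^n\psi\to 0$ with $\psi=\Psi_\infty^{\mathrm{int}}$, while on incoming tails $\Phi_n\equiv 0$ by construction.

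The main obstacle is the spectral step: one must rule out unit-modulus eigenvalues of $E$. I would argue that any eigenvector $\phi$ with $E\phi=\lambda\phi$ and $|\lambda|=1$ satisfies $\|E\phi\|=\|\phi\|$, forcing the pier outflow of $\phi$ to vanish; the local scattering rule~(\ref{eq:internalTE}) combined with the assumption $abcd\neq 0$ then propagates this vanishing inward from every boundary vertex, and the connectedness of $G$ together with the non-emptiness of $\delta V$ yields $\phi\equiv 0$. Since $E$ is a finite-dimensional contraction, the absence of unit-modulus eigenvalues simultaneously yields invertibility of $I-E$ (used above to define $\Psi_\infty^{\mathrm{int}}$) and the strong convergence $E^n\to 0$ (used to decay the transient). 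This ``no dark modes tied to the boundary'' lemma is the step that genuinely uses the structural hypotheses of the model, while the rest of the argument is essentially a decomposition into free and bound dynamics.
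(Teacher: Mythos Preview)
The paper does not prove this theorem here; it is quoted from \cite{HS}. Your strategy of reducing to the finite contraction $E=\chi U\chi^*$ and summing a Neumann series is the right shape, but the decisive spectral step is wrong. You claim $E$ has no unit-modulus eigenvalues because vanishing at the piers ``propagates inward'' via $abcd\neq0$ and connectedness of $G$. The paper's own Lemma~\ref{lem:inexfunctions} refutes this: $\ker(1-\chi U\chi^*)=\spann\{\gamma_f^{in}:f\in F^{in},\ \omega^{|f|_G}=1\}$, which is nontrivial whenever some internal facial walk has length matching the order of $\omega$ (and for $\omega=1$ every internal face contributes a dark mode). These eigenvectors are supported on closed facial walks disjoint from $\delta A_{qy}$, so your propagation never reaches them; concretely, knowing that the incoming island arc at an interior vertex vanishes does not force the incoming bridge arc to vanish, and the argument stalls after one step. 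Hence $I-E$ need not be invertible and $E^n\not\to0$, so neither the formula $\Psi_\infty^{\mathrm{int}}=(I-E)^{-1}B\alpha$ nor the transient decay $E^n\Psi_\infty^{\mathrm{int}}\to0$ is justified as written.

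The repair is to note that for a finite-dimensional contraction the peripheral eigenspace is reducing (if $E\phi=\lambda\phi$ with $|\lambda|=1$ then $E^*\phi=\bar\lambda\phi$), and that any such $\phi$ vanishes on all of $\delta A_{qy}$ by the norm identity, hence is orthogonal to the inflow $B\alpha$, which is supported on $\{\xi_u^{in}:u\in\delta V^{BU}\}$. Running your Neumann series on the orthogonal complement of the peripheral eigenspace, where $E$ genuinely has spectral radius $<1$, then gives convergence and lands $\psi_\infty$ in $\ker(1-E)^\perp$, which is exactly the characterization the paper records as Lemma~\ref{lem:HS}. Without this orthogonal decomposition your argument does not close.
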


In the rest of this section, we provide the key lemma which is the starting point of all our proofs of our theorems. 
\begin{lemma}[Key lemma]\label{lem:key}
Assume $d\in \mathbb{R}$. 
Let $\Psi\in \mathbb{C}^{A^{BU}\cup A_{tl}}$ be a generalized eigenfunction such that $U\Psi=\Psi$, and we set $\psi:=\chi\Psi$. Here $\chi$ is the indicator on $A^{BU}$. 
For connected vertices $u,u'\in V$ in the original graph $G$, let $e_{br}\in A_{br}$ be the arc connecting the islands $A_{is}(u)$ and $A_{is}(u')$ in the blow up graph $G^{BU}$. 
We set $e_{is},\epsilon_{is}\in A_{is}(u)$ and $e_{is}',\epsilon'_{is}\in A_{is}(u')$ by $t(e_{is})=o(e_{br})=o(\epsilon_{is})$ and $t(e_{is}')=t(e_{br})=o(\epsilon_{is}')$,  
respectively. 
Then we have 
\begin{align}\label{eq:key1}
    \psi(\epsilon_{is})=\omega \psi(e_{is}'),\;
    \psi(\epsilon_{is}')=\omega \psi(e_{is}) 
\end{align}
and 
\begin{align}\label{eq:key2}
    \psi(e_{br})=\frac{\omega}{b}(\psi(e_{is})+d\psi(e_{is}')),\;
    \psi(\bar{e}_{br})=\frac{\omega}{b}(\psi(e_{is}')+d\psi(e_{is})).
\end{align}
\end{lemma}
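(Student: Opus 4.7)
The plan is to set up the local fixed-point equations at the two endpoints of the bridge $e_{br}$, solve the resulting $4\times 4$ system for the four unknown amplitudes, and use the unitarity of $H$ together with the assumption $d\in\mathbb{R}$ to reduce the formulas to the stated form. Let $x:=o(e_{br})\in V^{BU}$ and $x':=t(e_{br})\in V^{BU}$. At $x$, the two incoming arcs are $e_{is}$ (island) and $\bar{e}_{br}$ (bridge), and the two outgoing arcs are $\epsilon_{is}$ (island) and $e_{br}$ (bridge); an analogous statement holds at $x'$ with roles swapped. The fixed-point condition $U\psi=\psi$ together with \eqref{eq:internalTE} yields the four scalar equations
\begin{align*}
\psi(\epsilon_{is}) &= a\psi(e_{is})+b\psi(\bar{e}_{br}), & \psi(e_{br}) &= c\psi(e_{is})+d\psi(\bar{e}_{br}),\\
\psi(\epsilon_{is}') &= a\psi(e_{is}')+b\psi(e_{br}), & \psi(\bar{e}_{br}) &= c\psi(e_{is}')+d\psi(e_{br}),
\end{align*}
in which $\psi(e_{is})$ and $\psi(e_{is}')$ are regarded as parameters.

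Next I would eliminate $\psi(\bar{e}_{br})$ between the two bridge equations, obtaining $(1-d^2)\psi(e_{br})=c\bigl(\psi(e_{is})+d\psi(e_{is}')\bigr)$, and then rewrite the scalar factor using unitarity. From $H^{-1}=H^{*}$ and the explicit adjugate formula, one reads off $\bar c=-b/(ad-bc)=b/\omega$, so $\omega\bar c=b$. Combined with $|c|^2+d^2=1$, which follows from $d\in\mathbb{R}$, this gives $1-d^2=|c|^2=c\bar c$, and hence
\[
\frac{c}{1-d^2}=\frac{1}{\bar c}=\frac{\omega}{b}.
\]
This turns the expression for $\psi(e_{br})$ into \eqref{eq:key2}; the expression for $\psi(\bar{e}_{br})$ follows by the same argument with $u\leftrightarrow u'$.

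Finally I would substitute the formula for $\psi(\bar e_{br})$ back into the island equation at $x$, giving $\psi(\epsilon_{is})=(a+d\omega)\psi(e_{is})+\omega\psi(e_{is}')$. The coefficient $a+d\omega$ vanishes on the nose: from the row-orthogonality relation $a\bar c+b\bar d=0$ together with $\bar d=d$, one has $a\bar c=-bd$, hence
\[
a+d\omega \;=\; a+\frac{db}{\bar c} \;=\; \frac{a\bar c+bd}{\bar c} \;=\; 0,
\]
which yields $\psi(\epsilon_{is})=\omega\psi(e_{is}')$. The companion identity $\psi(\epsilon_{is}')=\omega\psi(e_{is})$ follows by the same computation at $x'$.

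The main obstacle is not the algebra itself but recognizing the pair of unitarity identities that must be invoked: both $\omega\bar c=b$ and $a+d\omega=0$ are needed, and both rely crucially on $d\in\mathbb{R}$ so that $|c|^2=1-d^2$ and $a\bar c=-bd$ hold without stray conjugates. Without the reality of $d$, neither the scalar factor $c/(1-d^2)$ nor the cancellation of $a+d\omega$ would collapse to the clean form claimed, which is precisely why Assumption~1 is imposed.
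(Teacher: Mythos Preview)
Your proof is correct and follows essentially the same strategy as the paper: write down the four local stationarity equations at the two endpoints of the bridge and solve them using the unitarity of $H$ together with $d\in\mathbb{R}$. The only real difference is packaging. The paper rewrites the system as a transfer-matrix relation $[x,x']^{\top}=T[z_1,z_2]^{\top}=\sigma_X T[z_1',z_2']^{\top}$, factorizes $T$ using $a=-\omega\bar d$, and then exploits that $\sigma_X$ commutes with $\begin{bmatrix}1&d\\ d&1\end{bmatrix}$ to read off \eqref{eq:key1}; your version instead eliminates $\psi(\bar e_{br})$ by direct substitution and uses the scalar identities $c/(1-d^2)=\omega/b$ and $a+d\omega=0$. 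These are the same unitarity facts in different clothing: the paper's $-\omega\bar d=a$ with $d\in\mathbb{R}$ is exactly your $a+d\omega=0$, and your $1-d^2=c\bar c$ together with $\omega\bar c=b$ is what makes the paper's factorization of $T$ work. Your scalar route is arguably more transparent; the paper's matrix route makes the Remark following the lemma (what goes wrong when $\mathrm{Im}\,d\neq 0$) drop out more naturally. One very minor quibble: the identity $\omega\bar c=b$ is a consequence of unitarity alone and does not itself require $d\in\mathbb{R}$; it is the steps $|c|^2=1-d^2$ and $a\bar c=-bd$ that use the reality of $d$.
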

\begin{proof}
Let us put 
$\psi(e_{is})=:z_1$,
$\psi(e_{is}')=:z_1'$,
$\psi(\epsilon_{is})=:z_2$,
$\psi(\epsilon_{is}')=:z_2'$,
$\psi(e_{br})=:x$,
$\psi(\bar{e}_{br})=:x'$,
By the definition of the facial quantum walk, we have 
\begin{align}
    \begin{bmatrix} z_2 \\ x \end{bmatrix}=\begin{bmatrix}a & b \\ c & d\end{bmatrix}\begin{bmatrix} z_1 \\ x' \end{bmatrix},\;
    \begin{bmatrix} z_2' \\ x' \end{bmatrix}=\begin{bmatrix}a & b \\ c & d\end{bmatrix}\begin{bmatrix} z_1' \\ x \end{bmatrix}.
\end{align}
These equations are equivalent to  
\begin{equation}\label{eq:bridge}
    \begin{bmatrix} x \\ x '\end{bmatrix}= T \begin{bmatrix} z_1 \\ z_2 \end{bmatrix} =\sigma_X T \begin{bmatrix} z_1' \\ z_2' \end{bmatrix}, 
\end{equation}
where 
\begin{align*} 
T &= \begin{bmatrix}  0 & -b \\ 1 & -d    \end{bmatrix}^{-1} \begin{bmatrix}  a & -1 \\ c & 0 \end{bmatrix} \text{ and } \sigma_X=\begin{bmatrix}0& 1\\ 1 & 0 \end{bmatrix}.
\end{align*}
The matrix $T$ can be deformed by 
the unitarity of $H$ $(-\omega \bar{d}=a)$ and the assumption $d\in \mathbb{R}$ as follows: 
\begin{align} 
T &= \frac{1}{b}\begin{bmatrix} 1 & d \\ d & 1 \end{bmatrix} \begin{bmatrix} \omega & 0 \\ 0 & 1 \end{bmatrix}. \end{align}
From (\ref{eq:bridge}), we have 
\begin{align}
    \begin{bmatrix} z_1 \\ z_2 \end{bmatrix}
    &=\begin{bmatrix} \omega & 0 \\ 0 & 1 \end{bmatrix}^{-1}\begin{bmatrix} 1 & d \\ d & 1 \end{bmatrix}^{-1}\sigma_X\begin{bmatrix} 1 & d \\ d & 1 \end{bmatrix} \begin{bmatrix} \omega & 0 \\ 0 & 1 \end{bmatrix}\begin{bmatrix} z_1' \\ z_2' \end{bmatrix} \notag \\
    &= \begin{bmatrix} \omega^{-1} & 0 \\ 0 & 1 \end{bmatrix} \sigma_X \begin{bmatrix} \omega & 0 \\ 0 & 1 \end{bmatrix}\begin{bmatrix} z_1' \\ z_2' \end{bmatrix}, \label{eq:ensoku}  
\end{align}
which implies $z_2=\omega z_1'$ and $z_2'=\omega z_1$, where in the second equality, we used the commutativity of $\sigma_X$ and $\begin{bmatrix} 1 & d \\ d & 1 \end{bmatrix}$. Inserting this into (\ref{eq:bridge}), we obtain 
\[ x=\frac{\omega}{b}(z_1+dz_1'),\;x'=\frac{\omega}{b}(z_1'+dz_1). \]
It is completed the proof. 
\end{proof}
\begin{remark}
Without the assumption of $d\neq \mathbb{R}$, the relation corresponding to (\ref{eq:ensoku}) is 
\[ \begin{bmatrix} z_1 \\ z_2 \end{bmatrix}=\begin{bmatrix} \omega & 0 \\ 0 & 1 \end{bmatrix}^{-1}\begin{bmatrix} 
-2i \frac{\mathrm{Im}(d)}{|b|^2} & \frac{1-d^2}{|b|^2} \\ \frac{1-\bar{d}^2}{|b|^2} & 2i \frac{\mathrm{Im}(d)}{|b|^2} \end{bmatrix}\begin{bmatrix} \omega & 0 \\ 0 & 1 \end{bmatrix}\begin{bmatrix} z_1' \\ z_2' \end{bmatrix}. \]
Since the diagonal parts of the $2\times 2$ matrix in RHS remains as long as $\mathrm{Im}(d)\neq 0$, it is impossible to describe the relations of the amplitudes like the simple expression of (\ref{eq:key1}) in general. 
That's why we gave the assumption ``$d\in \mathbb{R}$" throughout this paper. 
\end{remark}
This key lemma makes us possible to obtain the values of the stationary state on $A_{is}$ sequentially along a facial walk of the graph because of (\ref{eq:key1}). 
This is the derivation of the name of our quantum walk model, {\it facial quantum walk}. 
Let us see it more precisely in the next subsection.

%%%%
%%%%
\section{Scattering matrix}
\subsection{Scattering matrix and the external facial walks}
%The arc of $Tail_j$ whose terminus is boundary vertex of $G$ is denoted by  $\epsilon_j\in A(Tail_j)$, where   $t(\epsilon_j)\in V^{BU}$. 
Let $\delta V^{BU}=\{v_1,\dots,v_{\delta V}\}$ and $\delta A_{pr}^+=\{\epsilon_1,\dots,\epsilon_{|\delta V|}\}$ such that $t(\epsilon_j)=v_j$. 
The stationary state is denoted by $\Psi_\infty\in\mathbb{C}^{A^{BU}\cup A_{tl}}$. 
Let us represent the inflow from the outside by $\bs{\alpha}_{in}\in \mathbb{C}^{\delta V^{BU}}$ such that 
\[ \bs{\alpha}_{in}(v)=\Psi_\infty(\epsilon) \text{ with $t(\epsilon)=v$} \]
for any $v\in \delta V^{BU}$, 
while the outflow to the outside by  $\bs{\beta}_{out}\in \mathbb{C}^{\delta V}$ such that 
\[ \bs{\beta}_{out}(v)=\Psi_\infty(\bar{\epsilon}) \text{ with $t(\epsilon)=v$} \] 
for any $v\in \delta V^{BU}$. 
The scattering matrix $S:\mathbb{C}^{\delta V}\to \mathbb{C}^{\delta V}$ is defined by 
\[  \bs{\beta}_{out}=S\bs{\alpha}_{in}, \]
which is independent of the inflow $\bs{\alpha}_{in}$. 
It is shown in \cite{HS} that 
such a matrix $S$ exists and this matrix is unitary. 

Before stating our main results for the scattering matrix $S$, we prepare important graph notions which express the scattering matrix. 
Let us consider a closed walk in $G^{BU}$ represented by a sequence of {\it arcs} in $A_{br}$ and {\it walks} in $A_{is}$; 
\[ f=(e_0,\xi_0,e_1,\xi_1,\dots,e_{s-1},\xi_{s-1}), \;\;(j=0,\dots,\kappa-1) \] 
where $t(\xi_{s-1})=o(e_0)$ and 
\[e_{j+1}=
\begin{cases}
\rho(e_j) \text{ in $\tilde{G}$} & \text{: $o(\rho(e_j))\notin \delta V^{BU}$, }\\
\rho(\overline{\rho(e_{j})}) \text{ in $\tilde{G}$} & \text{: $o(\rho(e_j))\in \delta V^{BU}$,}
\end{cases} \]
\[\xi_j=\begin{cases}
\xi_j\in A_{is} \text{ with }\;o(\xi_j)=t(e_j),\;t(\xi_j)=o(e_{j+1}) & \text{: $o(\rho(e_j))\notin \delta V^{BU}$}\\
(\xi_j^{out},\xi_j^{in})\in \delta A_{qy} & \\
\qquad\qquad\text{ with } o(\xi_j^{out})=t(e_{j}),\;t(\xi_j^{out})=o(\xi_j^{in}),\;t(\xi_j^{in})=o(e_{j+1}) & 
\text{: $o(\rho(e_j))\in \delta V^{BU}$}
\end{cases}\]
The length of such a walk $f$ in $G$ is denoted by $s=|f|_G$. 
If the sequence of $f$ has no overlaps with $\delta A_{qy}$, we call it the {\it internal} facial closed walk. On the other hand, if $f$ has an overlap with $\delta A_{qy}$, we call it the {\it external} facial closed walk. The sets of all  internal closed facial walks and all external walks are denoted by $F^{in}$ and $F^{ex}$, respectively. 
See Figure~\ref{fig:2}.
\begin{figure}[hbtp]
    \centering
    \includegraphics[keepaspectratio, width=120mm]{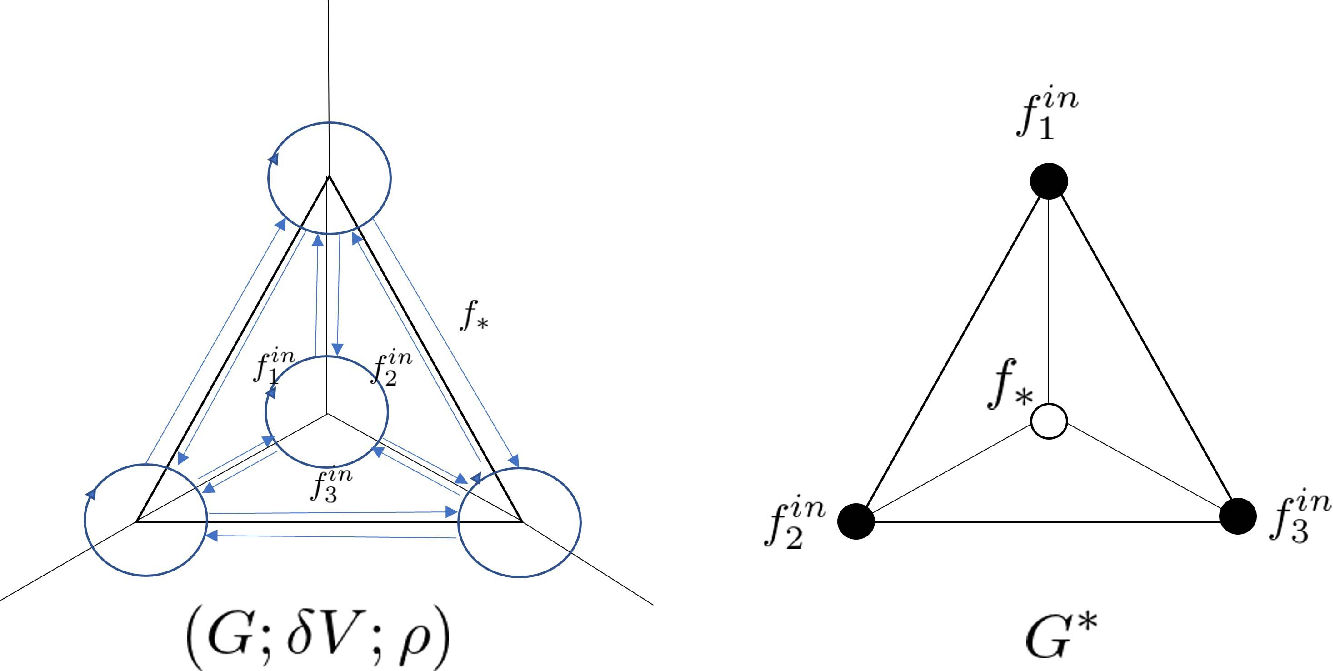}
    \caption{The internal and external facial walks and the dual graph: The closed oriented cycles correspond to facial walks in $G^{BU}$. The facial walk which passes through the tails, $f_*$, is the external facial walk, while the other facial walks, $f_1^{in},f_2^{in},f_3^{in}$, are the internal facial walk. The right figure illustrating that each facial walk in the left figure corresponds to each vertex and the external facial walk corresponds to the sink, will be used for the discussion in Section~\ref{sect:4.3.3}.  }
    \label{fig:2}
\end{figure}

Let $f\in F^{ex}$ be an external facial closed walk which visits $\kappa$ vertices of $\delta V^{BU}$: 
%\begin{equation}\label{eq:externalwalk}
%    f:=(e_1,\xi_1,\dots,e_{\ell_0},\xi_{\ell_0},\dots,e_{\ell_1},\xi_{\ell_1},\dots,e_{\ell_{\kappa-1}},\xi_{\ell_{\kappa-1}},\dots, e_1,\xi_1), 
%\end{equation} 
\begin{equation}\label{eq:externalwalk}
    f:=(\xi_{\ell_0}^{in},e_{\ell_0+1},\xi_{\ell_0+1},\dots,e_{\ell_1},\xi_{\ell_1}^{out},\dots,\xi_{\ell_{\kappa-1}}^{in},e_{\ell_{\kappa-1}+1},\xi_{\ell_{\kappa-1}+1},\dots,e_{\ell_{0}},\xi_{\ell_{0}}^{out}), 
\end{equation} 
where 
$e_{\ell_j+m}\in A_{br}$, $\xi_{\ell_j+m}\in A_{is}$ and 
$\xi_{\ell_j}=(\xi_{\ell_j}^{out},\xi_{\ell_j}^{in})\in \delta A_{qy}$ ($j=0,\dots,\kappa-1$ and $m=1,\dots, \ell_{j+1}-\ell_j$). We set $\partial f:=\{\xi_{\ell_0},\dots,\xi_{\ell_{\kappa-1}}\}$ and its cardinality by $\kappa=|\partial f|$. 
The set of $\delta f$ plays a role of a  ``quay" receiving inward from the outside and radiating outward from this external facial walk $f$. 
Recall that every walk denoted as $\xi_j\in \delta A_{qy}$ is expressed by the pair of $2$ arcs $\xi_j=(\xi_j^{out},\xi_j^{in})$ with $t(\xi_j^{out})=o(\xi_j^{in})\in \delta V^{BU}$. 
We also set the distance between boundary vertices in $G$ by 
$\delta_j(f):=\ell_{j+1}-\ell_j$ for $j=0,\dots,|\partial f|-1$ in the modulus of $|\partial f|$. 
For arbitrary external facial closed walk $f\in F^{ex}$,  we set the identity matrix, the weighted cyclic permutation matrix on $\mathbb{C}^{\{0,\dots,|\partial f|-1\}}$ by 
\begin{equation}\label{eq:op_ex_walk} (I_{f}h)(j)=h(j),\;(P_{f}(\omega)h)(j)=\omega^{\delta_{j-1}(f)}h(j-1)
\end{equation}
for any $h\in \mathbb{C}^{\{0,\dots,|\partial f|-1\}}$ and $j\in \{0,\dots,|\partial f|-1\}$ in the modulus of $|\partial f|$, respectively. Here the unit complex number $\omega$ was defined by $\omega=-\det(H)$. 
Now we are ready to give our theorem for the scattering matrix.
%%%
\begin{theorem}\label{thm:scattering}
The setting of the graph $G^{BU}$ and the facial quantum walk are denoted as the above. 
Then the scattering matrix of the facial quantum walk is decomposed into the following $|F^{ex}|$ unitary matrices as follows:  \[S=\bigoplus_{f\in F^{ex}} S_{f}, \]
where 
\[ S_{f}=bcP_{f}(\omega)\;(I_{f}-aP_{f}(\omega))^{-1}+dI_{f}. \]
Here the operators induced by each external facial closed walk $I_f$ and $P_f(\omega)$ are defined in (\ref{eq:op_ex_walk}).
\end{theorem}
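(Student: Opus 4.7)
The plan is to reduce the scattering problem to a finite linear system indexed by the quays of each external facial walk $f \in F^{ex}$, by combining the bridge-by-bridge propagation supplied by Lemma~\ref{lem:key} with the local $H$-scattering at each boundary vertex, and then to solve this system explicitly. The direct-sum structure will follow from the fact that every boundary vertex lies on exactly one external facial walk.

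First, I would extract the following propagation rule from Lemma~\ref{lem:key}: along any facial walk, crossing one bridge multiplies the island amplitude of the stationary state by $\omega$. Indeed, if $\xi,\xi'\in A_{is}$ are the island arcs immediately before and after a single bridge $e_{br}$ along the walk (so that $t(\xi)=o(e_{br})$ and $o(\xi')=t(e_{br})$), then the second identity in (\ref{eq:key1}), applied with $e_{is}=\xi$ and $\epsilon_{is}'=\xi'$, gives $\Psi_\infty(\xi')=\omega\,\Psi_\infty(\xi)$. Iterating across the $\delta_j(f)=\ell_{j+1}-\ell_j$ bridges that lie between the quays at indices $\ell_j$ and $\ell_{j+1}$ in $f$, one obtains the key propagation identity
\[
\Psi_\infty(\xi_{\ell_{j+1}}^{out})=\omega^{\delta_j(f)}\,\Psi_\infty(\xi_{\ell_j}^{in}).
\]

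Next, at each quay $v_j\in\partial f$, the local scattering relation (\ref{eq:internalTE}) applied to the stationary state yields, with the shorthand $z_j^{in}:=\Psi_\infty(\xi_{\ell_j}^{out})$, $z_j^{out}:=\Psi_\infty(\xi_{\ell_j}^{in})$, $\alpha_j:=\bs{\alpha}_{in}(v_j)$, $\beta_j:=\bs{\beta}_{out}(v_j)$,
\[
z_j^{out}=a\,z_j^{in}+b\,\alpha_j,\qquad \beta_j=c\,z_j^{in}+d\,\alpha_j.
\]
Inserting the propagation identity and using the definition (\ref{eq:op_ex_walk}) of $P_f(\omega)$ converts the first relation into the operator equation $\mathbf{z}^{in}=a\,P_f(\omega)\mathbf{z}^{in}+b\,P_f(\omega)\bs{\alpha}$ on $\mathbb{C}^{\partial f}$. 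Since the unitarity of $H$ with $abcd\neq 0$ forces $|a|<1$, while $P_f(\omega)$ is a unitary weighted cyclic permutation (because $|\omega|=1$), the operator $I_f-aP_f(\omega)$ is invertible. Solving for $\mathbf{z}^{in}$, plugging into $\bs{\beta}_{out}=c\,\mathbf{z}^{in}+d\,\bs{\alpha}_{in}$, and exploiting that $P_f(\omega)$ commutes with $(I_f-aP_f(\omega))^{-1}$ (the latter being a power series in the former), one arrives at the stated block $S_f$.

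Finally, the direct-sum decomposition would be justified as follows. Every $v\in\delta V^{BU}$ determines a unique quay $(\xi^{out},\xi^{in})\in\delta A_{qy}$, and because the rotation partitions $A_{is}$ into facial walks, this quay lies on exactly one $f\in F^{ex}$. Hence distinct external facial walks yield decoupled linear systems, so $S=\bigoplus_{f\in F^{ex}}S_f$. The step I expect to be the main obstacle is the precise bookkeeping in the propagation identity: one must verify that the exponent $\delta_j(f)$ in $P_f(\omega)$ counts exactly the bridges traversed by $f$ between consecutive quays, with no off-by-one errors arising from the asymmetric roles of $\xi^{out}$ and $\xi^{in}$ at a quay or from the cyclic indexing modulo $|\partial f|$. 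Once this counting is settled, the remainder is a weighted cyclic linear algebra computation; the unitarity of each $S_f$ then follows either from the general unitarity of $S$ proved in \cite{HS} or by direct verification using the relations imposed by the unitarity of $H$.
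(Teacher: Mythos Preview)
Your proposal is correct and follows essentially the same route as the paper: both arguments use the Key Lemma to propagate island amplitudes along each external facial walk by successive factors of $\omega$, combine this with the local $H$-scattering at each quay to obtain a cyclic linear system on $\mathbb{C}^{\partial f}$, and then invert $I_f-aP_f(\omega)$. The only cosmetic difference is the choice of intermediate variable: the paper sets $\eta_j:=\Psi_\infty(\xi_{\ell_j}^{in})/b$ and obtains $\bs{\eta}_f=(I_f-aP_f(\omega))^{-1}\bs{\alpha}_{in,f}$ directly, whereas you work with $z_j^{in}=\Psi_\infty(\xi_{\ell_j}^{out})$ and pick up an extra factor $bP_f(\omega)$; the two are related by one shift along the walk and lead to the same $S_f$.
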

%%%%%
\begin{proof}
For an external closed walk $f\in F^{ex}$ expressed as  (\ref{eq:externalwalk}), 
the boundaries of the islands with tails, that overlap with the walk $f$, are denoted by $u_0,\dots,u_{\kappa-1}\in \delta V^{BU}$.  
We identify each element of $\delta V^{BU}$ with that of $\delta V$. 
 Let us put $\Psi_\infty(\xi_{\ell_j}^{(in)})=:b \eta_j$ ($j=0,1,\dots,\kappa-1$).
By Lemma~\ref{lem:key}, we have 
\begin{equation*}
    \psi_{\infty}(\xi_{\ell_{j+1}}^{(out)})=b\eta_j \omega^{\delta_j}\;\;(j=0,\dots,\kappa-1). 
\end{equation*} 
Since $U\Psi_\infty=\Psi_\infty$, it holds that
\begin{equation}\label{eq:eta0} \eta_{j+1}=a\;\omega^{\delta_j}\eta_j+\bs{\alpha}_{in}(u_{j+1})\;\;(j=0,\dots,\kappa-1)  
\end{equation}
and 
\begin{align}\label{eq:beta}
\bs{\beta}_{out}(u_{j+1}) &= bc\;\omega^{\delta_{j}}\eta_j+d\;\bs{\alpha}_{in}(u_{j+1}) \;\;(j=0,\dots,\kappa-1).
\end{align}
Set $\bs{\eta}_f:=[\eta_0\dots,\eta_{\kappa-1}]^\top$, $\bs{\alpha}_{in,f}=[\bs{\alpha}_{in}(u_0),\dots,\bs{\alpha}_{in}(u_{\kappa-1})]^\top$ and $\bs{\beta}_{out,f}=[\bs{\beta}_{in}(u_0),\dots,\bs{\beta}_{in}(u_{\kappa-1})]^\top$. Then (\ref{eq:eta0}) is equivalent to 
\begin{equation}\label{eq:eta}
\bs{\eta}_f=(I_f-aP_f(\omega))^{-1}\bs{\alpha}_{in,f}, 
\end{equation}
which implies 
 \begin{align*}
     \bs{\beta}_{out,f} &= bcP_f(\omega)\bs{\eta}_f+d\bs{\alpha}_{in,f} \\
     &=(\; bcP_f(\omega)(I_f-aP_f(\omega))^{-1}+dI_f\;)\bs{\alpha}_{in,f}. 
 \end{align*} 
Then we obtain the desired conclusion. 
\end{proof}
\begin{remark}\label{rem:scattering_decom}
The matrix $I_f-aP_f(\omega)$ is invertible. 
Indeed, if $|\partial f|$ is $\kappa$, then $S_f$ can be expanded by 
\[ S_f=\frac{bc}{1-a^\kappa\Delta(f)}P_f\left( I_f+aP_f+\cdots+(aP_f)^{\kappa-1} \right)+dI_f, \]
where $\Delta(f)=\omega^{\delta_0+\cdots+\delta_{\kappa-1} }$. 
If $\omega=1$, then 
the scattering matrix $S_f$ becomes a circulant and unitary matrix. 
\end{remark}
\begin{remark}
The scattering matrix can be only described by the information on the external facial walks of the graph. Moreover to describe each scattering on the external facial walk $f\in F^{ex}$, we only need the information on geometric positions of the tails in $f$. 
\end{remark}
%%%%%%
\subsection{Detection of the underlying $2$-cell embedding on orientable surface of $K_{3,3}$ from the scattering of facial quantum walk}\label{sect:ex1} 

Using Theorem~\ref{thm:scattering}, let us consider the three kinds of rotations to the same graph $K_{3,3}$ in Figure~\ref{fig:embedding}. 
Each rotation is denoted by $[10,4,4]$-type, $[6,6,6]$-type and $[18]$-type, respectively.  
In $[10,4,4]$-type, the induce facial walks are represented by the sequence of $10$ vertices and $2$ kinds of sequences of $4$ vertices such that 
\begin{align*}
&``\;1\to 3'\to 2\to1'\to 3\to 3'\to 1\to 2'\to 3\to 1'(\to 1)\;",\\
&``\;1\to 1'\to 2\to 2'(\to 1)\;",\\
&``\;2'\to 2\to 3'\to 3(\to 2')\;".
\end{align*} 
In $[6,6,6]$-type, the induce facial walks are represented by three kinds of sequences of $6$ vertices such that 
\begin{align*}
&``\;1'\to 1\to 3'\to 3\to 2'\to 2(\to 1')\;"\\
&``\;1\to 1'\to 3\to 3'\to 2\to 2'(\to 1)\;"\\
&``\;1\to 2'\to 3\to 1'\to 2\to 3'(\to 1)\;".
\end{align*} 
In $[18]$-type, the induce facial walk is represented by sequence of $18$ vertices such that 
\begin{multline*}``\;1'\to  1\to 2'\to 3\to 3'\to 1\to 1'\to 2\to 2'\\\to 1\to 3'\to 2\to 1'\to 3\to 2'\to 2\to 3'\to 3(\to 1')\;".
\end{multline*} 

%It is well known that the $2$-cell embedding ways on the orientable surface in $K_{3,3}$ are essentially restricted to the above three types; $[10,4,4]$, $[6,6,6]$ and $[18]$, up to the automorphism. 

Let us set the tails as follows. See Figure~\ref{fig:tail_embedding}. 
Assume that the tails are joined to all the vertices of $K_{3,3}$ so that every tail is placed between the vertices $1'$ and $2'$ for any vertices $1,2,3$, while the tail is placed between $1$ and $2$ for any vertices $1',2',3'$. 
Then the interaction from the tails to the scattering exhibits at the vertex between the vertices $1$ and $2$ or between the vertices $1'$ and $2'$ in each facial walk. 
For example, in the facial walk on $[10,4,4]$-type, the interaction from the tails exhibits at the bold face vertices: $(1,\bs{3'},2,1',3,3',1,2',\bs{3},1')$.  
For the other facial walks, in the facial walk on $[10,4,4]$-type, ``$(\bs{1},\bs{1'},\bs{2'},\bs{2})$", the interaction from the tails exhibits at every vertex, while in the facial walk ``$(2',2,3,3')$", there are no interaction to any tails, which is the internal facial closed walk. 
Assume $\omega=1$. By Theorem~\ref{thm:scattering}, the scattering matrix for $[10,4,4]$-type is constructed by the permutation on the vertex set $\{1,2,3,1',2',3'\}$ such 
 that $(3,3')(1',1,2',2)\in \mathfrak{S}_6$: 
 \begin{equation}\label{eq:10,4,4} 
 S:=S([10,4,4])\cong S_2 \oplus S_4.
 \end{equation}
Here 
\[S_n= \frac{bc}{1-a^n}P_n(I_n+aP_n+\cdots+a^{n-1}P_{n-1})+dI_n, \]
where $P_n$ is the permutation matrix on $\{0,\dots,n-1\}$ such that 
$(P_nf)(j)=f(j-1)$ by Remark~\ref{rem:scattering_decom}. 
The scattering for $[6,6,6]$-type is constructed by the permutation $(2,1')(1,2')(3,3')\in \mathfrak{S}_6$: 
\begin{equation}\label{eq:6,6,6} 
S([6,6,6])\cong S_2 \oplus S_2  \oplus S_2. 
\end{equation}
The scattering for the $[18]$-type is constructed by the permutation $(1,1',2,2',3',3)\in \mathfrak{S}_6$: 
\begin{equation}\label{eq:18} 
S([18])\cong S_6. 
\end{equation}
Assume that we are not informed the underlying orientable surface of the $2$-cell embedding for $K_{3,3}$. Under this situation, let us  detect the embedding way of $K_{3,3}$ by using the scattering information. 
Let us input the inflow from the vertex $1$. 
It is easily check that every entry of $S_n$ is non-zero. Then by (\ref{eq:10,4,4}), (\ref{eq:6,6,6}) and (\ref{eq:18}), the support vertices of the outflows are described by
\begin{align*}
\supp(\beta_{out}([10,4,4])) &= \{1,2,1',2'\}, \\
\supp(\beta_{out}([6,6,6])) &= \{1,2'\}, \\
\supp(\beta_{out}([18])) &= \{1,2,3,1',2',3'\}, 
\end{align*}
respectively. 
The number of tails where the outflow is detected is denoted by $N$. Then we can see that the number of detected tails of quantum walker has the bijection to the underlying embedding as follows:  
\[ \text{The underlying embedding of $K_{3,3}$}=\begin{cases} [10,4,4] & \text{, if $N=4$,}\\ [6,6,6] & \text{, if $N=2$,}\\ [18] & \text{, if $N=6$.}\end{cases} \]
Note that this setting of the configuration of tails and inflow were for the purpose of the detection of the underlying embedding on the closed surface of $K_{3,3}$, but adjusting them, we can obtain another scattering.  
\begin{figure}[hbtp]
    \centering
    \includegraphics[keepaspectratio, width=160mm]{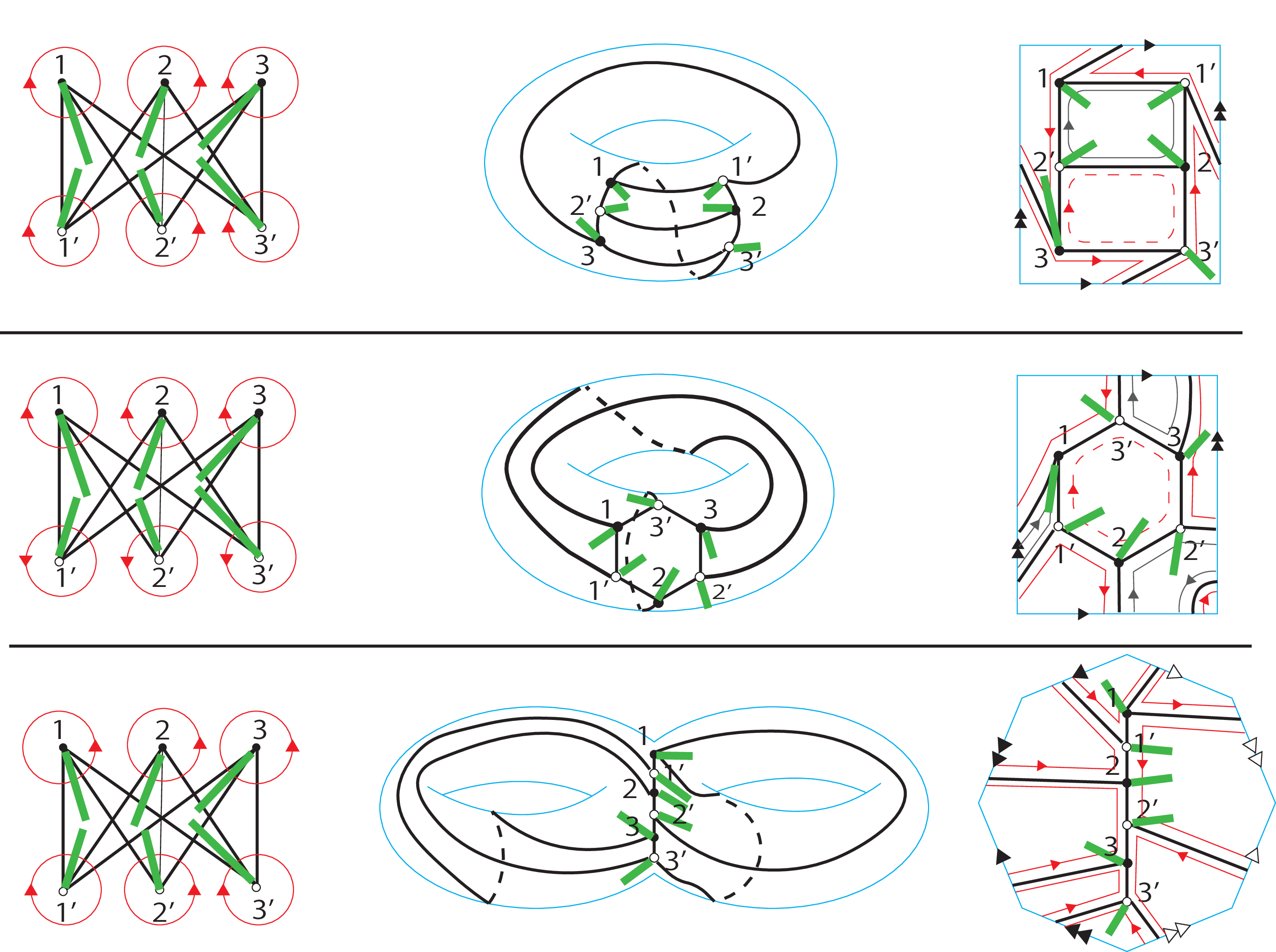}
    \caption{The location of tails in each embedding of $K_{3,3}$ for Section~\ref{sect:ex1}: The bold (green) lines depict the tails in the setting of Section~\ref{sect:ex1}. 
    The inflow comes from only the tail joining to the vertex $1$. Let us call this tail an incoming tail. 
    For the embedding [4,4,10], the number of tails which is included in the same face as the incoming tail is $N=4$; 
    for the embedding [6,6,6], the number of tails which is included in the same face as the incoming tail is $N=2$; 
    for the embedding [18], the number of tails which is included in the same face as the incoming tail is $N=6$. }
    \label{fig:tail_embedding}
\end{figure}

%%%%%
\section{ Stationary state}
In this section, we consider the stationary state and how a quantum walker penetrates the internal graph using some graph geometric information; in particular the facial closed walks. Since our interest focus on the stationary state in the internal graph, we introduce the operator which restricts the functions to the internal graph as follows. Let $\chi: \mathbb{C}^{A^{BU}\cup A_{tl}}\to \mathbb{C}^{A^{BU}}$ be the restriction of any function $\Psi\in \mathbb{C}^{A^{BU}\cup A_{tl}}$ to $\mathbb{C}^{A^{BU}}$ such that 
\[ (\chi\Psi)(e) = \Psi(e) \]
for any $e\in A^{BU}$. 
The adjoint $\chi^*$ can be described by 
\[ (\chi^*\psi)(e) = \begin{cases} \psi(e) & \text{: $e\in A^{BU}$} \\
0 & \text{: otherwise.}\end{cases} \]
for any $\psi\in \mathbb{C}^{A^{BU}}$ and $e\in A^{BU}\cup A_{tl}$. The restriction of $\Psi_\infty\in \mathbb{C}^{A^{BU}\cup A_{tl}}$ to the internal $\mathbb{C}^{A^{BU}}$ is denoted by $\psi_\infty:=\chi \Psi_\infty$. 
%%%%
\subsection{Eigenvectors induced by facial closed walks}
Let us start by considering the following lemma which provides a useful approach to compute the stationary state. 
\begin{lemma}[\cite{HS}]\label{lem:HS}
Let $\delta A^+_{pr}$ be $\{\epsilon_1,\dots,\epsilon_{|\delta V|}\}$. 
The stationary state in the uniformly bounded functional space is the unique solution of the generalized eigenequation  $U\Psi_\infty=\Psi_\infty$ satisfying with the following conditions.  
\[ \Psi_\infty(\epsilon_j)=\bs{\alpha}_{in}(j) 
\text{ for any $j=1,\dots,|\delta V|$} \]
and  
\[ \chi\Psi_\infty\in \ker(1-\chi U\chi^* )^\perp. \]
\end{lemma}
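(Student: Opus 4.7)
The plan is to combine Theorem~3.1 with a Cesaro-type analysis of the iteration restricted to the internal graph. Existence of a uniformly bounded $\Psi_\infty$ with $U\Psi_\infty=\Psi_\infty$ is provided directly by Theorem~3.1. The boundary identity $\Psi_\infty(\epsilon_j)=\bs{\alpha}_{in}(j)$ follows from (\ref{eq:tail1}) and (\ref{eq:initialstate}): the amplitude $\alpha_j$ is transported unchanged along each inward tail arc, so $\Psi_n(\epsilon_j)=\alpha_j$ for every $n$, and passing to the limit gives the claim.

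Next I would set $M:=\chi U\chi^*$ and project the eigenequation onto $\mathbb{C}^{A^{BU}}$. The resulting iteration reads $\chi\Psi_{n+1}=M\,\chi\Psi_n+r$, where the source $r\in\mathbb{C}^{A^{BU}}$ is determined by the inflow at the boundary. Because the tail dynamics (\ref{eq:tail2}) is one-way outward, outflow tail arcs are never terminal of an arc of $A^{BU}$, and only inflow arcs feed the internal graph; by the coin evolution at each boundary vertex, the source is supported on the island-out arcs emanating from $\delta V^{BU}$ with value $b\,\bs{\alpha}_{in}(j)$ at the $j$-th boundary vertex, and in particular $r$ is independent of $n$. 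Since (\ref{eq:initialstate}) gives $\chi\Psi_0=0$, the iteration unrolls to $\chi\Psi_n=\sum_{k=0}^{n-1}M^k r$.

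The crucial and hardest step is to prove the identity $\ker(1-M)=\ker(1-M^*)$. This uses that $M$ is the compression of the unitary $U$ and therefore a contraction on a finite-dimensional Hilbert space: if $Mv=v$ with $\|v\|=1$, the chain $1=\langle Mv,v\rangle=\langle v,M^*v\rangle\le\|M^*v\|\le 1$ together with the Cauchy--Schwarz equality case forces $M^*v=v$. Given this, for any $v\in\ker(1-M)=\ker(1-M^*)$ pairing with the partial sum yields $\langle\chi\Psi_n,v\rangle=n\,\langle r,v\rangle$; the uniform boundedness of $\Psi_n$ forces $\langle r,v\rangle=0$, hence $\langle\chi\Psi_n,v\rangle=0$ for all $n$, and taking the limit gives $\chi\Psi_\infty\in\ker(1-M)^\perp$.

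For uniqueness, any bounded solution $\Psi'$ of $U\Psi'=\Psi'$ with $\Psi'(\epsilon_j)=\bs{\alpha}_{in}(j)$ also satisfies $\Psi'(e_k)=\alpha_j$ on every inflow tail arc (apply (\ref{eq:tail1}) to the time-stationary $\Psi'$), so the same source $r$ arises and $(1-M)\chi\Psi'=(1-M)\chi\Psi_\infty$. If $\chi\Psi'$ is also in $\ker(1-M)^\perp$, then $\chi\Psi_\infty-\chi\Psi'\in\ker(1-M)\cap\ker(1-M)^\perp=\{0\}$. The outflow amplitudes are then recovered from the bottom row of $H$ at each boundary vertex from the now-identified internal values and propagated along the tails by (\ref{eq:tail2}), giving $\Psi'=\Psi_\infty$.
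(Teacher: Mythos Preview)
The paper does not prove this lemma; it is quoted from \cite{HS} without argument, so there is no in-paper proof to compare against. Your argument is correct and supplies a self-contained proof. A couple of points worth making explicit: the uniform boundedness of $\chi\Psi_n$ that you invoke to force $\langle r,v\rangle=0$ is justified because $A^{BU}$ is finite and the cited convergence theorem gives pointwise (hence norm) convergence of $\chi\Psi_n$; and your identification of the source term $r$ as supported on the outgoing island arcs $\xi^{in}_u$ at boundary vertices with value $b\,\bs{\alpha}_{in}(j)$ matches the local coin rule (\ref{eq:internalTE}) with the pier playing the role of $e^{br}_+$. The key observation $\ker(1-M)=\ker(1-M^*)$ for the contraction $M=\chi U\chi^*$ is exactly the mechanism that later, in the paper's proof of Lemma~\ref{lem:inexfunctions}, reappears in the concrete form ``$\supp(\psi)\cap\delta A_{qy}=\emptyset$ for $\psi\in\ker(1-M)$'' via the strict inequality $|a|<1$; your abstract Cauchy--Schwarz argument is the cleaner way to see why the eigenspace is reducing.
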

\noindent The above lemma yields that the stationary state is obtained by removing the $(+1)$-eigenvectors of $U$, whose supports are included in the internal graph $A^{BU}$, from a generalized eigenfunction satisfying the boundary conditions.   
In the following, we identify such $(+1)$-eigenvectors and a generalized eigenfunction by using an expression of the facial closed walks. 

Let us explain the derivation of the internal and external facial walks denoted by the following Definition~\ref{def:facialfunction}. 
See also Figure~\ref{fig:soccer1} which illustrates the internal and external facial walks for the truncated icosahedron in Section~4.3.1. 
The internal facial walk induced by \[f=(e_0,\xi_0,\dots,e_{\kappa-1},\xi_{\kappa-1})\in F^{in}\] is constructed by tracing the walk $f$ launched from $\xi_0$ with  $\gamma_f^{in}(\xi_0)=b$ so that the conditions (\ref{eq:key1}) and (\ref{eq:key2}) in Lemma~\ref{lem:key} are satisfied for any $a\in A(f)$ and $\bar{a}\in A_{br}(f)$ except the final place between $(\xi_{\kappa-1},e_0,\xi_0)$ in general. To satisfy the condition at all the arcs of $f$, it must be $\gamma(\xi_0)=\omega\gamma_f^{in}(\xi_{\kappa-1})$, which implies $\omega^{|f|_G}=1$. Then for any $f\in F^{in}$, we have $\gamma_f^{in}\in \ker(1-\chi U\chi^*)$ if and only if $\omega^{|f|_G}=1$. 
On the other hand, the derivation of the external facial walk is constructed so that $\gamma_f^{ex}$ satisfies the conditions (\ref{eq:key1}) and (\ref{eq:key2}) at every arcs of every partial walk of $f\in F^{ex}$;  \[(\xi_{\ell_m}^{in},e_{\ell_m+1},\xi_{\ell_m+1},e_{\ell_m+2},\xi_{\ell_m+2}\dots,e_{\ell_{m+1}},\xi_{\ell_{m+1}}^{out})\subset f\] by putting $\gamma_f^{ex}(\xi_{\ell_m}^{in})=b\eta_m$ for $m=0,1,\dots,{\kappa-1}$. 
Then for example, $\gamma_f^{ex}(e_{\xi_{\ell+1}^{out}})$ must be  $\omega^{\delta_m(f)}\times b\eta_m$ by condition (\ref{eq:key1}). 
Moreover at every quay with $A_{is}(f)\cap A_{qy}$, the following local stationary condition between the outflow and inflow must be satisfied such that 
\begin{align*} \bs{\beta}_{out}(i_{m+1})&=c\gamma_f^{ex}(\xi_{\ell_{m+1}}^{out})+d\bs{\alpha}_{in}(i_{m+1})\\
&=bc\omega^{\delta_m(f)}\eta_m +d\bs{\alpha}_{in}(i_{m+1}).
\end{align*} 
See Figure~\ref{fig:internalfacialfunction} for the internal facial function, and Figure~\ref{fig:externalfacialfunction} for the external facial function. 
\begin{definition}[Functions induced by facial closed walks]\label{def:facialfunction}
\noindent
\begin{enumerate}
\item Internal facial function: \\
For an internal facial closed walk \[f=(e_0,\xi_0,\dots,e_{s-1},\xi_{s-1})\in F^{in},\] 
with $e_j\in A_{br}$, $\xi_j\in A_{is}$ $(j=0,1,\dots,s-1)$, 
the function $\gamma_f^{in}\in \mathbb{C}^{A^{BU}}$ with $\supp(\gamma_f^{in})=A(f)\cup  \left( A(\bar{f})\cap A_{br} \right)$ is defined by \[\gamma_f^{in}=\gamma_f^{in,+}+\gamma_f^{in,-},\] where
\begin{align*} &\gamma_f^{in,+}(\xi_j)=b\omega^{j},\;\gamma_f^{in,+}(e_j)=\omega^j, \;(j=0,\dots,s-1)\;, \gamma_f^{in,+}(e)=0 \text{ for any $e\notin A(f)$}\\
&\gamma_f^{in,-}(\bar{e}_j)=d\omega^j, \;(j=0,\dots,s-1)\;, \gamma_f^{in,-}(e)=0 \text{ for any $\bar{e}\notin A(f)$}
\end{align*}
\item External facial function: \\
Let an external facial closed walk be denoted by 
%\[ f=(e_1,\xi_1,\dots,e_{\ell_0},\xi_{\ell_0},\dots,e_{\ell_1},\xi_{\ell_1},\dots,e_{\ell_{\kappa-1}},\xi_{\ell_{\kappa-1}},\dots, e_1,\xi_1)\in F^{ex} \]
\[ f=(\xi_{\ell_0}^{in},e_{\ell_0+1},\xi_{\ell_0+1},\dots, e_{\ell_0+\delta_0(f)},\xi_{\ell_1}^{out},\dots, \xi_{\ell_{\kappa-1}}^{in},e_{\ell_{\kappa-1}+1},\xi_{\ell_{\kappa-1}+1},\dots, e_{\ell_{\kappa-1}+\delta_{\kappa-1}(f)},\xi_{\ell_0}^{out}) \]
with $e_{\ell_j+m}\in A_{br}$, $\xi_{\ell_j+m}\in A_{is}$ and  $\xi_{\ell_j}=(\xi_{\ell_j}^{out},\xi_{\ell_j}^{in})\in \delta A_{qy}$ ($m=1,\dots,\delta_j(f)$; $j=0,\dots,\kappa-1$), 
where joined tails are labeled by  $Tail_{\ell_0},\dots,Tail_{\ell_{\kappa-1}}$ and placed in $\xi_{\ell_0},\dots, \xi_{\ell_{\kappa-1}}$.  
The induced external facial function $\gamma_f^{ex}\in \mathbb{C}^{A^{BU}}$ with $\supp(\gamma_f^{ex})=A(f)\cup (A(\bar{f})\cap A_{br})$ is defined by 
\[\gamma_f^{ex}:=\gamma_f^{ex,+}+\gamma_f^{ex,-}.\]
Here for $m=0,\dots,\kappa-1$, 
\begin{align*}
&\gamma_f^{ex,+}(\xi_{\ell_m}^{in}) = b\eta_m,\;
\gamma_f^{ex,+}(\xi_{\ell_m+j}) =b\omega^j\eta_m \;(j=1,\dots,\delta_m-1),\;
\gamma_f^{ex,+}(\xi_{\ell_{m+1}}^{out})=  b\omega^{\delta_m}\eta_m. \\
&\gamma_f^{ex,+}(e_{\ell_m+j})= \omega^j\eta_m,
\;(j=1,\dots, \delta_m),\;
\gamma_f^{ex,+}(e) =0 \text{ for any $e\notin A(f)$}; \\
&\gamma_f^{ex,-}(\bar{e}_{\ell_m+j}) = d \omega^{j}\eta_m,\;(j=1,\dots, \delta_m),\; \gamma_f^{ex,-}(e) =0 \text{ for any $\bar{e}\notin A(f)$,} 
\end{align*}
where $\eta_m$ $(m=0,\dots,\kappa-1)$ is denoted by
\[ \eta_m=\frac{\omega^{-\delta_{m}(f)}}{bc}\left(\bs{\beta}_{out}(\ell_{m+1})-d\bs{\alpha}_{in}(\ell_{m+1})\right). \]
The extension of $\gamma_f^{ex}\in \mathbb{C}^{A^{BU}}$ to $\tilde{\gamma}_f^{ex}\in \mathbb{C}^{A^{BU}\cup A_{tl}}$ is defined by 
\[ 
\tilde{\gamma}^{ex}_f(e)=
\begin{cases}
\gamma^{ex}_f(e) & \text{: $e\in A(f)$} \\
\bs{\alpha}_{in}(i_m) & \text{: $e\in A(Tail_{i_m})$, $\dist(o(e),G)<\dist(t(e),G)$ $(m=0,\dots,\kappa-1)$} \\
\bs{\beta}_{out}(i_m) & \text{: $e\in A(Tail_{i_m})$, $\dist(o(e),G)>\dist(t(e),G)$ $(m=0,\dots,\kappa-1)$} \\
0 &\text{: otherwise.}
\end{cases} 
\]
\end{enumerate}
\end{definition}
\begin{remark}
The internal facial function $\gamma_f^{in}$ is determined uniquely up to a multiple constant, while the external facial function $\gamma_f^{out}$ is determined uniquely. 
\end{remark}

Note that all the arcs of $A^{BU}$ are covered by all the facial walks. 
Here, the pair of an arc and its inverse arc in $A_{br}$ may appears in the same facial walk, that is, for such a facial walk $f=(e_0,\xi_0\dots,e_{\kappa-1},\xi_{\kappa-1})$, there are $i$, $j$ such that $e_i=\bar{e}_j\in A_{br}$ (see Figure\ref{fig:embedding} for [10,4,4]-type and [18]-type, for examples). In this case, we set $\gamma_f^{in}(e_i)=\gamma_f^{in,+}(e_i)+\gamma_f^{in,-}(\bar{e}_j)=\omega^i+d\omega^j$. 
%%%%%
\begin{figure}[hbtp]
    \centering
    \includegraphics[keepaspectratio, width=80mm]{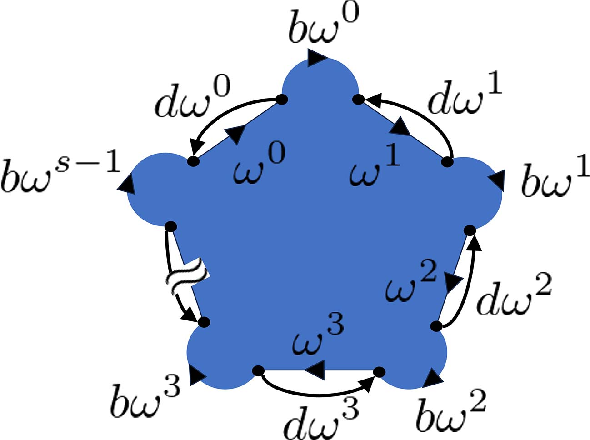}
    \caption{Internal facial function: An internal facial walk $f\in F^{in}$ whose length is $s$ is picked up. The value assigned at each arc of its internal facial function $\gamma^{in}_f$ is given as depicted by this figure. The values on the island arcs are $b\omega^0,b\omega^1,\dots,b\omega^{s-1}$, the values on the bride  arcs are $\omega^0,\omega^1,\dots,\omega^{s-1}$ and $d\omega^0,d\omega^1,\dots,d\omega^{s-1}$. }
\label{fig:internalfacialfunction}
\end{figure}
\begin{figure}[hbtp]
    \centering
\includegraphics[keepaspectratio, width=180mm]{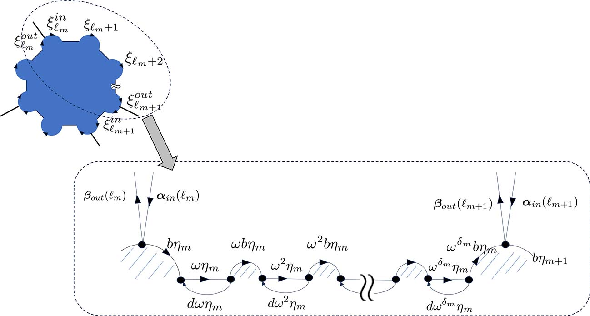}
    \caption{External facial function: The left figure shows an external facial walk $f\in F^{ex}$. The structure between the tails $Tail_{\ell_{m}}$ and $Tail_{\ell_{m+1}}$ is enlarged in the right figure. The values at each arc in this region of the induced external facial function $f$ are depicted. 
    }
\label{fig:externalfacialfunction}
\end{figure}

Now we are ready to state the following lemma.  
\begin{lemma}\label{lem:inexfunctions}
Let $\gamma^{in}_f$ and $\gamma^{ex}_f$ be defined as the above. 
Assume $d\in \mathbb{R}$. 
For any $f\in F^{ex}$,  
\[ U\tilde{\gamma}^{ex}_f=\tilde{\gamma}^{ex}_f \]
and 
\[ \ker(1-\chi U\chi^*)=\spann\{\gamma_f^{in} \;|\; f\in F^{in} \text{ such that $\omega^{|f|_G}=1$ }\}.  \]
\end{lemma}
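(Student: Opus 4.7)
The statement has two parts: the eigenfunction property of $\tilde{\gamma}^{ex}_f$ and the kernel identification. My plan is to treat the eigenfunction property and the ``$\supseteq$'' half of the kernel identity as parallel direct verifications, and to tackle the nontrivial ``$\subseteq$'' direction by a global structural argument. The starting point of the forward verifications is to extract from the hypothesis $d \in \mathbb{R}$ and the unitarity of $H$ the two algebraic identities
\[ a = -d\omega \qquad\text{and}\qquad bc = (1 - d^2)\omega, \]
which follow by writing out $HH^* = I$ together with $\omega = -\det H$ and $\bar d = d$. With these in hand, verifying the coin equations $\psi(e^{is}_-) = a\psi(e^{is}_+) + b\psi(e^{br}_+)$ and $\psi(e^{br}_-) = c\psi(e^{is}_+) + d\psi(e^{br}_+)$ at each non-boundary vertex traversed by $f$ becomes formal bookkeeping in the walk indices. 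The coin equations at the quay vertices in the external case collapse exactly to the relations (\ref{eq:eta0}) and (\ref{eq:beta}) that already define the $\eta_m$. Free tail evolution is immediate since $\tilde{\gamma}^{ex}_f$ is constant on each half of each tail. For the internal functions the only additional condition comes from the wraparound vertex of the cycle, where the first identity above forces $\omega^{|f|_G} = 1$.

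For the ``$\subseteq$'' direction, take $\psi \in \ker(1 - \chi U \chi^*)$ and extend by zero to $\Psi = \chi^*\psi$. Then $U\Psi = \Psi$ on every internal arc, so Lemma~\ref{lem:key} applies at every internal bridge and yields $\psi(\xi_{j+1}) = \omega\,\psi(\xi_j)$ between consecutive island arcs of any facial walk. Separately, at each quay the coin equation with zero tail input collapses to $\psi(\xi^{in}) = a\,\psi(\xi^{out})$. Composing these relations once around an external walk $f \in F^{ex}$ with $\kappa = |\partial f|$ quays yields $\psi(\xi^{in}_{\ell_0}) = a^\kappa \omega^{|f|_G}\,\psi(\xi^{in}_{\ell_0})$, and since $|a| = |d|$ and $|d| < 1$ (the first from $a = -d\omega$, the second from $abcd \neq 0$ combined with unitarity), one has $|a^\kappa \omega^{|f|_G}| = |d|^\kappa < 1$, forcing $\psi$ to vanish on every island arc of every external facial walk. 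Composing around an internal walk gives $\psi(\xi_0) = \omega^{|f|_G}\,\psi(\xi_0)$, so either $\omega^{|f|_G} = 1$ or $\psi$ vanishes on the entire walk.

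Setting $\lambda_f := \psi(\xi^f_0)/b$ for each internal walk with $\omega^{|f|_G} = 1$, the remaining step is to verify $\psi = \sum_f \lambda_f \gamma^{in}_f$ on all of $A^{BU}$. Matching on island arcs is immediate since each island arc belongs to a unique facial walk. On a bridge arc $e$ lying in some walk $f_1$ whose reverse $\bar e$ lies in some walk $f_2$, the second formula of Lemma~\ref{lem:key},
\[ \psi(e) = \frac{\omega}{b}\bigl(\psi(e_{is}) + d\,\psi(e'_{is})\bigr), \]
splits into a first term matching $\gamma^{in,+}_{f_1}(e) = \omega^{i_{f_1}(e)}$ and a second term matching $\gamma^{in,-}_{f_2}(\bar e) = d\,\omega^{i_{f_2}(\bar e)}$ in the decomposition. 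The main obstacle is precisely this final bookkeeping: one must identify the unique pair of facial walks meeting at each bridge, handle the degenerate case $f_1 = f_2$ in which the same walk traverses a bridge in both directions (producing the combined entry $\omega^i + d\omega^j$ noted after Definition~\ref{def:facialfunction}), and check that when $f_1$ or $f_2$ is external the corresponding $\gamma^{in}$ is absent from the sum while the matching term in $\psi(e)$ is already zero by the previous step.
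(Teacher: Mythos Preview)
Your argument is correct and tracks the paper's proof closely: direct verification for the first assertion and the ``$\supseteq$'' inclusion, then propagation via the Key Lemma along facial walks for ``$\subseteq$''. The one genuine difference is how you force $\psi \in \ker(1-\chi U\chi^*)$ to vanish on the quay arcs. The paper does this in one stroke by a norm estimate: writing
\[
\|\chi U\chi^*\psi\|^2 \;=\; |a|^2\!\!\sum_{u\in\delta V^{BU}}\!\!|\psi(\xi_u^{out})|^2 \;+\; \sum_{u\notin\delta V^{BU}}\left\|\begin{bmatrix}\psi(\xi_u^+)\\ \psi(e_u^+)\end{bmatrix}\right\|^2,
\]
the unitarity of $H$ and $|a|<1$ give $\|\chi U\chi^*\psi\|^2 < \|\psi\|^2$ unless every $\psi(\xi_u^{out})$ vanishes, contradicting $\chi U\chi^*\psi=\psi$. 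Your alternative---winding once around each external walk and observing that the composite transfer factor $a^{\kappa}\omega^{|f|_G}$ has modulus $|a|^{\kappa}<1$---is equally valid and more elementary, reusing exactly the same cycle mechanism that handles the internal walks. The norm argument is blind to the facial-walk combinatorics and kills all quays simultaneously; your approach is slightly longer but conceptually uniform.
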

\begin{proof}
For any $f\in F^{ex}$, 
it can be directly confirmed that  $U\tilde{\gamma}_f^{ex}=\tilde{\gamma}_f^{ex}$ by the definition of $U$. 
%Let $\Psi$ be an generalized eigenfuction satisfying $U\Psi=\Psi$. 
%If $\supp\Psi\cap \delta A_{qy}\neq \emptyset$, then $\supp\Psi$ has an overlap with a tail. This implies $\Psi\notin \ell^2$ since the dynamics on the tail is free. Thus ``eigenvecotor" (which belongs to $\ell^2$) of eigenvalue $1$ must be included in $F^{in}$. 
%%%
Let us show the second statement. 
Assume there exists $\psi\in \ker(1-\chi U\chi^*)$ such that $\supp(\psi)\cap \delta A_{qy}\neq \emptyset$. 
Since $\psi\in \ker(1-\chi U\chi^*)$, we have $||\psi||^2=||\chi U\chi^*\psi||^2$. 
Let the set of arcs $A^{BU}$ be divided into
\begin{align*} 
A^{BU} 
&= 
\bigsqcup_{u\notin \delta V^{BU}} \{ \xi^{+}_u,e^{+}_u \} \sqcup \bigsqcup_{u\in \delta V^{BU}} \{\xi^{out}_u\}  \\
&=  
\bigsqcup_{u\notin \delta V^{BU}} \{ \xi^{-}_u,e^{-}_u \} \sqcup \bigsqcup_{u\in \delta V^{BU}} \{\xi^{in}_u\},
\end{align*}
where $t(\xi_u^{+})=t(e_u^{+})=o(\xi_u^{-})=o(e_u^{-})=u$ for any $u\notin \delta V^{BU}$, and $t(\xi_u^{out})=o(\xi_u^{in})=u$ for any $u\in \delta V^{BU}$. 
Then it holds that 
\begin{align*}
\begin{bmatrix}
(\chi U \chi^* \psi)(\xi_u^{-})\\(\chi U \chi^*\psi)(e_u^{-})
\end{bmatrix}
&= H \begin{bmatrix}
\psi(\xi_u^{+})\\ \psi(e_u^{+})
\end{bmatrix}
\text{ for any $u\notin \delta V^{BU}$}
\end{align*}
and 
\begin{align*}
(\chi U \chi^*\psi)(\xi_u^{in}) &= a\psi(\xi_u^{out})
\text{ for any $u\in \delta V^{BU}$}.
\end{align*}
Then, since $H$ is unitary and $|a|<1$, the norm $\chi U\chi^*\psi$ can be evaluated by
\begin{align*}
||\chi U\chi^*\psi||^2 &= \sum_{u\in \delta V^{BU}} |a|^2|\psi(\xi^{out}_u)|^2 + \sum_{u\notin \delta V^{BU}} 
\bigg|\bigg|\; \begin{bmatrix}
\psi(\xi_u^{+})\\ \psi(e_u^{+})
\end{bmatrix} \;\bigg|\bigg|^2 \\
& < ||\psi||^2,
\end{align*}
which is the contradiction.  
Then for any $\psi\in \ker(1-\chi U\chi^*)$,  $\supp(\psi)\cap A_{qy}=\emptyset$. 
%%%
%Let $\Psi$ be such an eigenvector, and we set
%$\psi:=\chi \Psi$. 
Now let us show that any eigenvector $\psi\in \ker(1-\chi U\chi^*)$ is expressed by a linear combination of $\gamma_f^{in}$'s with $\omega^{|f|_G}=1$. 
Note that there are no arcs $e_{br}$'s in $\supp(\psi)\cap A_{br}$ such that $\psi(e_{is})=\psi(e_{is}')=\psi(\epsilon_{is})=\psi(\epsilon_{is}')=0$ by Lemma~\ref{lem:key}. 
This means that every arc $e_{br}\in \supp(\psi)\cap A_{br}$ is sandwiched by arcs in  $\supp(\psi)\cap A_{is}$. 
Then let us set an island of the arc which has an overlap with the internal facial walk $f=(e_0,\xi_0,\dots,e_{\kappa-1},\xi_{\kappa-1})$ by $\xi_j\in \supp(\psi)\cap A_{is}(f)$. 
The definition of the internal facial function implies that for any $f,g\in F^{in}$ with $f\neq g$, we have $(\supp (\gamma_f^{in}) \cap A_{is})\cap (\supp (\gamma_g^{in}) \cap A_{is})=\emptyset$.
By (\ref{eq:key1}) in Lemma~\ref{lem:key}, 
this implies that if $\supp(\psi)\cap A_{is}(f)\neq \emptyset$ for some $f\in F^{in}$, then $\psi(\xi_{j+1})=\omega\psi(\xi_{j})$, $\psi(\xi_{j+2})=\omega^2\psi(\xi_{j})$,\dots, $\psi(\xi_{j-1})=\omega^{|f|_{g}-1}\psi(\xi_{j})$.  Thus $\supp(\psi)\cap A_{is}(f)$ must be $A_{is}(f)$,  and also $\omega^{|f|_G}$ must be $1$. Then for any $e_{is}\in \supp(\psi)\cap A_{is}(f)$, 
we can express $\psi(e_{is})=\alpha_f \gamma_f^{in}(e_{is})$ with some constant $\alpha_f\in \mathbb{C}$ which is independent of $e_{is}$. 
This means that $\psi|_{A_{is}(f)}=\alpha_f \gamma_f^{is}|_{A_{is}(f)}$ for any $f\in F^{is}$. Here if  $\supp(\psi)\cap A_{is}(f)= \emptyset$ or $\omega^{|f|_G}=1$, then we set $\alpha_f=0$.  From (\ref{eq:key2}) in Lemma~\ref{lem:key}, the value of $\psi(e_{br})$ with $e_{br}\in A_{br}(f)\cap A_{br}(\bar{g})$, where $A_{is}(f)\cap \supp(\psi),\;A_{is}(g)\cap \supp(\psi)\neq \emptyset$,  
must be 
\begin{equation}\label{eq:1}
    \psi(e_{br})=\frac{\omega}{b} \psi(e_{is})+d\frac{\omega}{b}\psi(e_{is}')
    =
    \begin{cases}
    \alpha_f\gamma_f^{in}(e_{br})+\alpha_g\gamma_g^{in}(\bar{e}_{br}) & \text{: $f\neq g$} \\
    \alpha_f\gamma_f^{in}(e_{br}) & \text{: $f=g$}
    \end{cases}
    , 
\end{equation} 
and 
\begin{equation}\label{eq:2}
    \psi(\bar{e}_{br})=d\frac{\omega}{b}\psi(e_{is})+\frac{\omega}{b}\psi(e'_{is})= 
    \begin{cases}
    \alpha_f\gamma_f^{in}(\bar{e}_{br})+\alpha_g\gamma_g^{in}(e_{br}) & \text{: $f\neq g$}\\
    \alpha_f\gamma_f^{in}(\bar{e}_{br}) & \text{: $f=g$}
    \end{cases}
\end{equation}
Here if $\omega^{|g|_G}\neq 0$, then $\psi(e_{is}')=0$ which means $\alpha_g=0$. 
Thus from (\ref{eq:1}) and (\ref{eq:2}), we obtain 
the eigenspace $\ker(1-\chi U\chi^*)$ is described by the linear combination of $\gamma_f^{in}$'s, which implies the desired conclusion.  
\end{proof}
Then we immediately obtain the following Proposition.
\begin{proposition}\label{prop:support}
Let $\gamma_f^{in}$ and $\gamma_g^{ex}$ be defined as the above. Then the stationary state $\Psi_\infty$ is described by 
\begin{equation}\label{eq:st}
    \Psi_\infty=\sum_{g\in F^{ex}} \tilde{\gamma}_g^{ex}-\sum_{f\in F^{in},\; \omega^{|f|_G}=1} c_f \chi^*\gamma_f^{in}
\end{equation} 
with some coefficients $c_f$'s. 
In particular, 
\begin{equation}\label{eq:luminous}
\supp(\Psi_\infty)\cap (A_{is}\setminus \delta A_{qy}) \subset \{A_{is}(f)\;|\; f\in F^{in} \text{ such that }\omega^{|f|_{G}}= 1 \}. 
\end{equation}
%where $\psi_\infty=\chi\Psi_\infty$.
\end{proposition}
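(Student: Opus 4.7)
The strategy is to invoke the uniqueness characterization in Lemma~\ref{lem:HS}: $\Psi_\infty$ is the unique $\Psi\in\mathbb{C}^{A^{BU}\cup A_{tl}}$ satisfying $U\Psi=\Psi$, the boundary conditions $\Psi(\epsilon_j)=\bs{\alpha}_{in}(j)$, and $\chi\Psi\perp\ker(1-\chi U\chi^*)$. I would exhibit a $\Psi$ of the claimed form (\ref{eq:st}) and then conclude by uniqueness.

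First I would set $\Phi:=\sum_{g\in F^{ex}}\tilde{\gamma}_g^{ex}$. By Lemma~\ref{lem:inexfunctions} and linearity, $U\Phi=\Phi$. Each boundary vertex $v_j$ is the base of a unique quay, which belongs to exactly one external facial walk $g_j$, so among the summands only $\tilde{\gamma}_{g_j}^{ex}$ places the prescribed inflow on $Tail_j$; hence $\Phi(\epsilon_j)=\bs{\alpha}_{in}(j)$ for every $j$. In general $\Phi$ fails the orthogonality condition, so I would correct by projection. Let $P$ denote orthogonal projection onto $\ker(1-\chi U\chi^*)$, which by Lemma~\ref{lem:inexfunctions} is spanned by $\{\gamma_f^{in}:f\in F^{in},\ \omega^{|f|_G}=1\}$, and expand $P\chi\Phi=\sum_f c_f\gamma_f^{in}$ in this family. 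To legitimately subtract these on the whole space, I would check that $U\chi^*\gamma_f^{in}=\chi^*\gamma_f^{in}$ on $A^{BU}\cup A_{tl}$; this amounts to observing, at each boundary vertex, that both the incoming island arc (a quay, hence off the internal walk $f$) and the incoming tail arc carry value zero, so the $H$-scattering delivers zero to both the outgoing island arc and the outgoing tail, preserving the zero extension to the tails. The candidate $\Psi:=\Phi-\sum_f c_f\chi^*\gamma_f^{in}$ then satisfies $U\Psi=\Psi$, retains the boundary values (the subtraction vanishes on tails), and has $\chi\Psi\perp\ker(1-\chi U\chi^*)$ by construction; uniqueness in Lemma~\ref{lem:HS} forces $\Psi_\infty=\Psi$, which is (\ref{eq:st}).

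For the support statement (\ref{eq:luminous}), I would argue by direct inspection of the decomposition at each $\xi\in A_{is}\setminus\delta A_{qy}$ lying on an internal facial walk $f\in F^{in}$. All external facial functions vanish at $\xi$ because internal and external walks have disjoint island supports, and among $\{\gamma_{f'}^{in}\}_{f'}$ only $\gamma_f^{in}$ is nonzero at $\xi$ by the same disjointness; hence $\Psi_\infty(\xi)=-c_f\gamma_f^{in}(\xi)$, a term that appears in the expansion only when $\omega^{|f|_G}=1$. The main obstacle lies in this last piece of geometric bookkeeping: one must justify that the $2$-cell embedding induced by $\rho$ assigns each island arc to a unique facial walk so the claimed disjointness truly holds, and one must carefully treat bridge arcs that occur with both orientations inside a single facial walk (as in the $[10,4,4]$- and $[18]$-types of Figure~\ref{fig:embedding}), where $\gamma_f^{in,+}$ and $\gamma_f^{in,-}$ superimpose consistently on the same arc.
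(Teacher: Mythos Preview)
Your argument is correct and follows essentially the same route as the paper: both start from $\Phi=\sum_{g\in F^{ex}}\tilde{\gamma}_g^{ex}$, note it satisfies $U\Phi=\Phi$ and the boundary conditions, then subtract the orthogonal projection onto $\ker(1-\chi U\chi^*)=\spann\{\gamma_f^{in}:\omega^{|f|_G}=1\}$ (Lemma~\ref{lem:inexfunctions}) and invoke the uniqueness in Lemma~\ref{lem:HS}. Your explicit verification that $U\chi^*\gamma_f^{in}=\chi^*\gamma_f^{in}$ at boundary vertices, and your direct support argument for (\ref{eq:luminous}) via the island-disjointness of facial walks, are details the paper leaves implicit; the ``obstacles'' you flag (uniqueness of the facial walk through a given island arc, and the bridge-arc overlap convention already recorded just before Lemma~\ref{lem:inexfunctions}) are routine and do not threaten the argument.
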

\begin{proof}
Note that $\tilde{\gamma}^{ex}_f$ satisfies the boundary condition in Lemma~\ref{lem:HS}, but in general, the second condition in Lemma~\ref{lem:HS} does not hold. 
Also note that $\sum_{g\in F^{ex}}\tilde{\gamma}^{ex}_g$ is only the function which has the overlap to the quays and tails. 
Let $\Gamma^{in}$ be denoted by 
\[ \Gamma^{in}=\mathrm{span}\{\chi^*\gamma_f^{in} \;|\; f\in F^{in},\;\omega^{|f|}=1\}. \]
The projection onto $\Gamma_{in}$ is denoted by $\Pi_{\Gamma^{in}}$. Then from Lemmas~\ref{lem:HS} and \ref{lem:inexfunctions}, the stationary state can be obtained 
\begin{align*}
\psi_\infty=\sum_{g\in F^{ex}}(1-\Pi_{\Gamma^{in}})\gamma_{g}^{ex}.
\end{align*}
By the Gram-Schmidt procedure to the $\{\gamma_f^{in} \;|\; f\in F^{in}\}$, RHS can be expressed by $\sum_{g\in F^{ex}}\gamma_{g}^{ex}-\sum_{c_f}c_f\gamma_f^{in}$, where the coefficients $c_f$' are determined by the Gram-Schmidt procedure. 
Then we obtain the desired conclusion. 
\end{proof}
%%
%We call the facial walk satisfying (\ref{eq:luminous}) in Proposition~\ref{prop:support} a luminous face. 
The luminous face is the face the one whose boundary walk is $f$ such that $c_f\neq 0$ in  (\ref{eq:st}) in Proposition~\ref{prop:support}.  All the arcs of the islands of the luminous face are included on the support of the stationary state. Note that if a face is not luminous, then no arcs of the islands are on the support.
%%%%%
\subsection{Soccer ball induced by quantum walk}
Let us enjoy the statement of Proposition~\ref{prop:support} by the soccer ball. 
Figure~\ref{fig:soccer1} is helpful in understanding the following calculations. 
Figure~\ref{fig:soccer1} depicts the blow up of the rotation tailed truncated icosahedron by the planar embedding.  The rotation is assigned clockwise at each vertex in the planar drawing of Figure~\ref{fig:soccer1}. Therefore the facial walks correspond to the boundaries of the faces of this planner graph and in particular the outer area corresponds to the external closed walk, $f_*$, 
which has the tails. 
The external facial walk $f_*$  
is depicted by the red closed walk in Fig~\ref{fig:soccer1} described by the sequence of arcs $(\xi_0^{in},e_1,\xi_1^{out},\xi_1^{in},e_2,\xi_2^{out},\dots, \xi_5^{in},e_0,\xi_0^{out})$. The external facial function of $f_*$ is given by  
\begin{align*}
\gamma_{f_*}^{ex}(\xi_0^{in}) &= b\eta_0, 
\gamma_{f_*}^{ex}(e_1) = \omega \eta_0, 
\gamma_{f_*}^{ex}(\bar{e}_1) = d\omega \eta_0, 
\gamma_{f_*}^{ex}(\xi_1^{out}) = b\omega \eta_0,\\
\gamma_{f_*}^{ex}(\xi_2^{in}) &= b\eta_1, 
\gamma_{f_*}^{ex}(e_2) = \omega \eta_1, 
\gamma_{f_*}^{ex}(\bar{e}_2) = d\omega \eta_1, 
\gamma_{f_*}^{ex}(\xi_2^{out}) = b\omega \eta_1, \\
\cdots \\
\gamma_{f_*}^{ex}(\xi_5^{in}) &= b\eta_5, 
\gamma_{f_*}^{ex}(e_0) = \omega \eta_5, 
\gamma_{f_*}^{ex}(\bar{e}_0) = d\omega \eta_5, 
\gamma_{f_*}^{ex}(\xi_0^{out}) = b\omega \eta_5.
\end{align*}
If the inflow $\bs{\alpha}_{in}=[1,1,1,1,1,1]^\top$, let us see $\eta_j$ $(j=0,1,\dots,5)$ can be computed by 
\[ \eta_j= \frac{1}{1-a\omega}. \]
To compute $\eta_j$ $(j=0,\dots,5)$, we need to obtain the outflow $\bs{\beta}_{out}$ for the inflow $\bs{\alpha}_{in}$. To this end, let us refer Theorem~\ref{thm:scattering}. 
For the external facial walk $f_*$, 
\[ P_{f_*}(\omega)\cong \omega
\begin{bmatrix}
0&0&0&0&0&1\\
1&0&0&0&0&0\\
0&1&0&0&0&0\\
0&0&1&0&0&0\\
0&0&0&1&0&0\\
0&0&0&0&1&0\\
\end{bmatrix}. \]
Then the scattering matrix can be computed by 
\begin{align*} 
S &= bc P_{f_*}(\omega)(I_{f_*}-aP_{f_*}(\omega))^{-1}+dI_{f_*} \\
&= \frac{\omega bc}{1-(a\omega)^6}
\begin{bmatrix}
(a\omega)^5&(a\omega)^4&(a\omega)^3&(a\omega)^2&(a\omega)^1&1 \\
1&(a\omega)^5&(a\omega)^4&(a\omega)^3&(a\omega)^2&(a\omega)^1 \\
(a\omega)^1&1&(a\omega)^5&(a\omega)^4&(a\omega)^3&(a\omega)^2 \\
(a\omega)^2&(a\omega)^1&1&(a\omega)^5&(a\omega)^4&(a\omega)^3 \\
(a\omega)^3&(a\omega)^2&(a\omega)^1&1&(a\omega)^5&(a\omega)^4 \\
(a\omega)^4&(a\omega)^3&(a\omega)^2&(a\omega)^1&1&(a\omega)^5 
\end{bmatrix}
+dI_{6}.
\end{align*}
Let us set the initial state by $\bs{\alpha}_{in}=[1,1,1,1,1,1]^\top$. 
The outflow $\bs{\beta}_{out}$ is described by 
\begin{align*}
\bs{\beta}_{out}[i] &= (S \bs{\alpha}_{in})[i] \\
&= \frac{\omega bc }{1-a\omega}+d
\end{align*}
for any $i=0,\dots,5$. 
Then we have 
\[ \eta_j= \frac{\omega^{-1}}{bc}(\bs{\beta}_{out}[j+1]-d\bs{\alpha}_{in}[j+1])=\frac{1}{1-a\omega}. \]

Let us consider the case for the setting of the quantum coin $H$ so that $\omega^6=1$ but $\omega^5\neq 1$, for example, $\omega=\pi$. According to Proposition~\ref{prop:support}, the polygon of the internal facial function in Definition~\ref{def:facialfunction} whose boundary walk $f$ such that $c_f\neq 0$ must be the hexagon; the internal facial functions of the pentagons have no contribution to the stationary state. 
Then the hexagons are nothing but the luminous faces. 
In Figure~\ref{fig:soccer2}, a soccer ball pattern is depicted by coloring the luminous faces. On the other hand, if we set the quantum coin $H$ by 
$\omega^6\neq 1$ and $\omega^5=1$, for example $\omega=e^{2\pi/5}$, the luminous faces are the external facial walk and the three pentagons which are adjacent to the external facial walk, while if   
$\omega^6\neq 1$ and $\omega^5\neq 1$ for example $\omega=e^{i\pi/4}$, then the luminous face is only the external facial walk. Moreover if we set the quantum coin $H$ by $\omega^6=\omega^5=1$, for example $\omega=1$,  then all the faces are luminous faces. 
%The method to solve the stationary state for $\omega=1$ is described in the next subsection. 
\begin{figure}[hbtp]
    \centering
    \includegraphics[keepaspectratio, width=130mm]{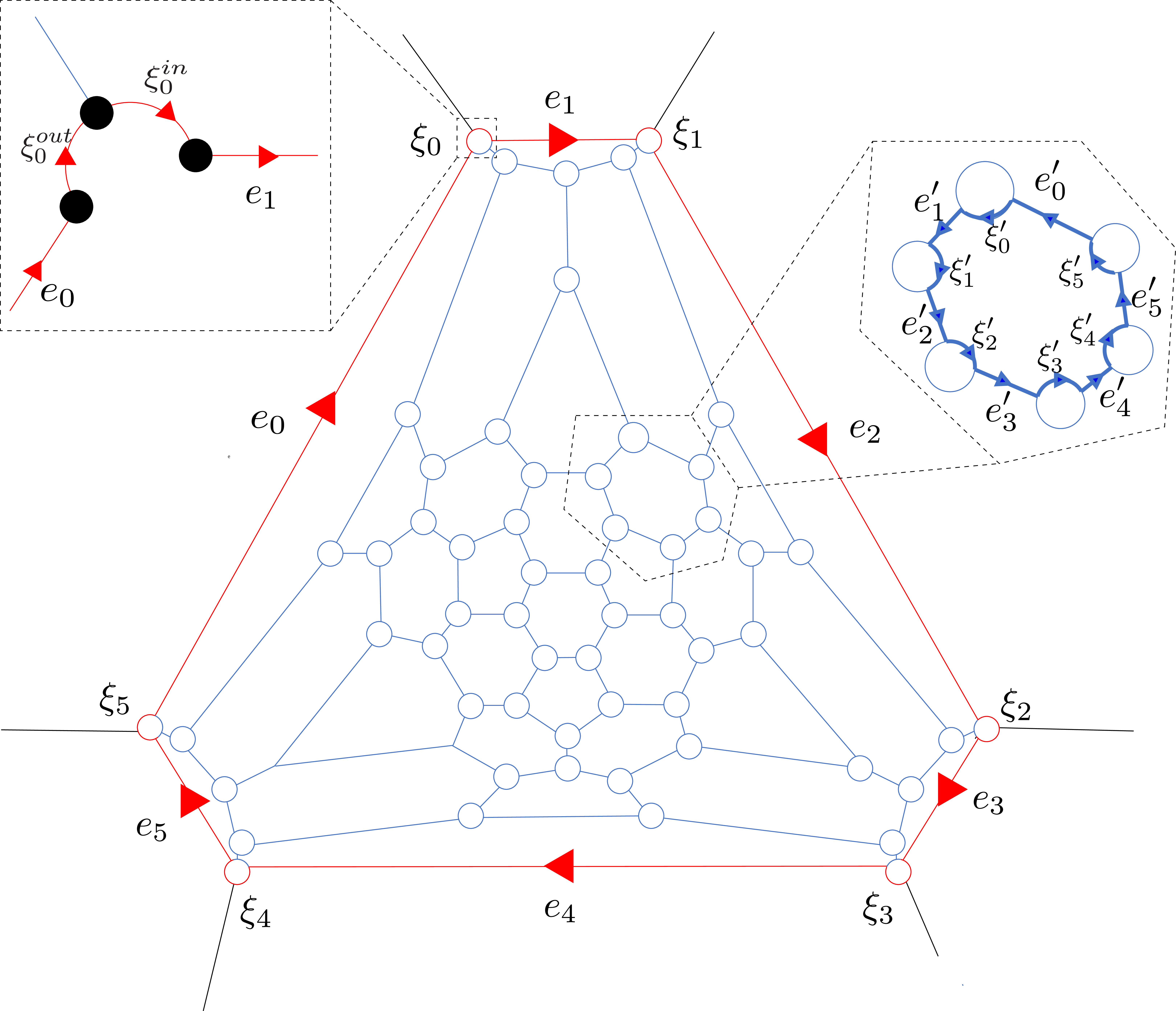}
    \caption{Internal and external facial walks on the truncated icosahedron: The truncated icosahedron $G$ is depicted as the planar graph. The rotation is assigned clockwise at each vertex. This graph is the blow up graph $\tilde{G}$: the circles colored by white are islands whose orientation is clockwise. The closed oriented cycle around the boundary of this graph  $f_*:=(e_0,\xi_0,\dots,e_5,\xi_5)$ is the external facial walk, where $\xi_j=(\xi_j^{out},\xi_j^{in})$ ($j=0,\dots,5$). The hexagons and  pentagons in the internal correspond to the internal facial walks. For example, the hexagon surrounded by the dotted line in the figure depicts the internal facial walk $f=(e_0',\xi_0',\dots,e_5',\xi_5')$. 
    The external facial function of $f_*$ on $\supp(f_*)$ is expressed by 
    $\gamma_{f_*}^{ex}(\xi_0^{in})=b\omega \eta_0$, 
    $\gamma_{f_*}^{ex}(e_1)=\omega \eta_0$,
    $\gamma_{f_*}^{ex}(\bar{e}_1)=db\omega \eta_0$,
    $\gamma_{f_*}^{ex}(\xi_1^{out})=b\omega\eta_0$,
    $\cdots$, 
    $\gamma_{f_*}^{ex}(\xi_5^{in})=b\omega \eta_5$, 
    $\gamma_{f_*}^{ex}(e_0)=\omega \eta_5$,
    $\gamma_{f_*}^{ex}(\bar{e}_0)=db\omega \eta_5$,
    $\gamma_{f_*}^{ex}(\xi_0^{out})=b\omega\eta_5$.
The internal facial function of $f$ on $\supp(f)$ is, for example, described by 
    $\gamma_f^{in}(e_j')=\omega^j$, $\gamma_f^{in}(\bar{e}_j')=d\omega^j$, $\gamma_f^{in}(\xi_j)=b\omega^j$, $(j=0,\dots,5)$. 
    }
    \label{fig:soccer1}
\end{figure}
\begin{figure}[hbtp]
    \centering
    \includegraphics[keepaspectratio, width=150mm]{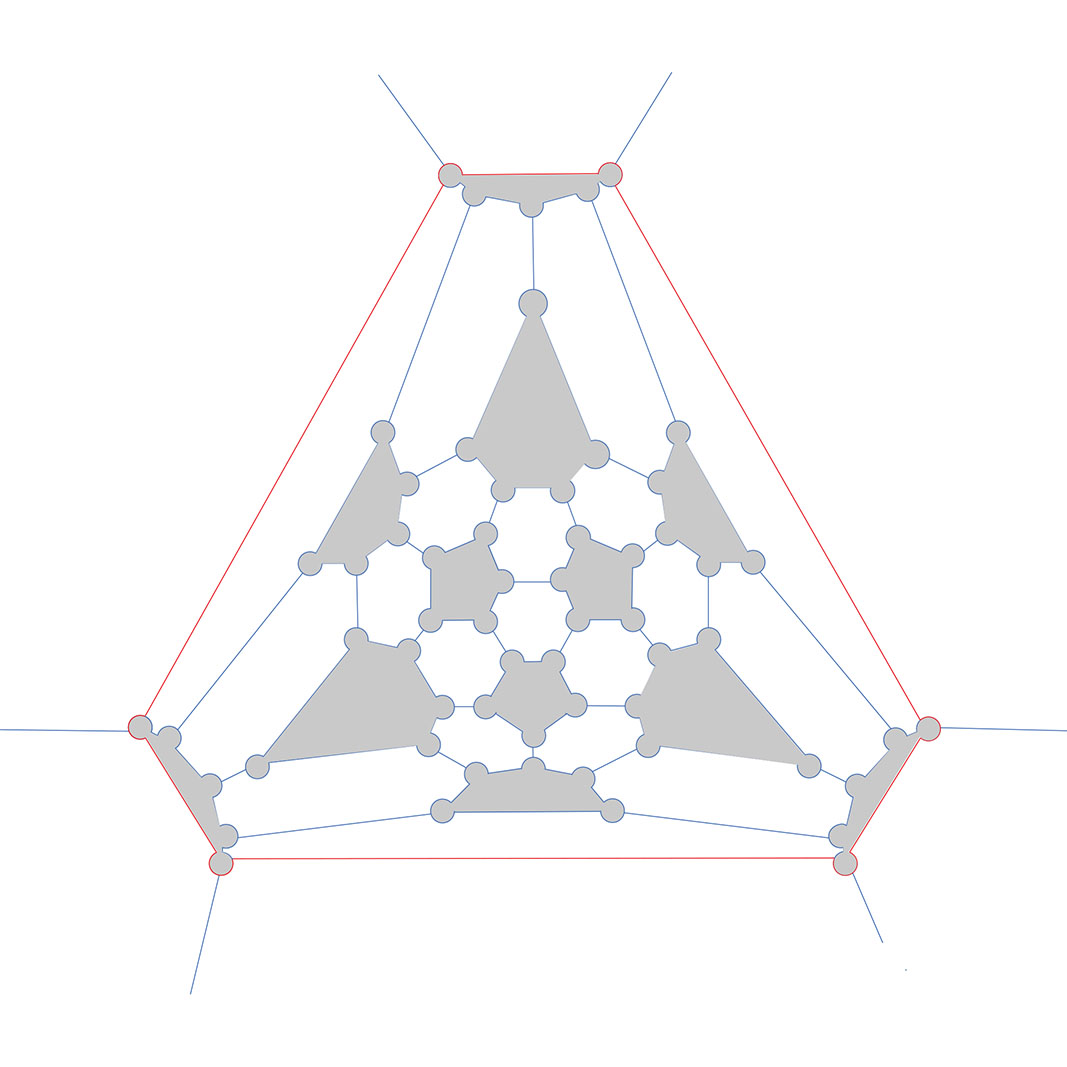}
    \caption{The luminous faces:  When $\omega^6=1$ and $\omega^5\neq 1$, the luminous faces for the stationary state are colored by white in this figure. }
    \label{fig:soccer2}
\end{figure}
\subsection{Stationary state for $\omega=1$}
Now the rest of our task for describing the stationary state is to characterize the coefficients $c_f$'s in Proposition~\ref{prop:support}. 
Let us obtain them as follows. 
By Proposition~\ref{prop:support}, if there are no internal faces such that $\omega^{|f|_G}=1$, then $\Psi_\infty=\sum_{g\in F^{ex}}\tilde{\gamma}_f^{ex}$. 
In this section, we are interested in the case, where $\omega=1$ because a quantum walk penetrates every internal faces from Proposition~\ref{prop:support}. 
\subsubsection{Preparation}
Let us set the tails satisfying  that there is only one external facial walk, namely $f_*$, and each tail joins to each vertex of $f_*$. 
In the following, let us determine the coefficients $\{c_f\}_{f\in F^{in}}$ in this setting. 

The set of the facial walks on $\tilde{G}$ are denoted by $F=F^{in}\cup F^{ex}$ with $|F|=p$ and  we set the tails so that the external facial walk is $F^{ex}=\{f_*\}$ and each  tail joins to each vertex of $f_*$. 
For an internal facial walk $f=(e_0,\xi_0,\dots,e_{s-1},\xi_{s-1})\in F^{in}$, we define $[f]:=\{e_0,\dots,e_{s-1}\}$ and $[\bar{f}]:=\{\bar{e}_0,\dots,\bar{e}_{s-1}\}$. 
We set the length of $f\in F^{in}$, $s$, by $|f|_G$. 
For internal facial walks, $f$ and $f'$,  we set 
\[ m_{f,f'}:=\# ([f]\cap [\bar{f'}]); \]
in particular, we write $m_f$ for $m_{f,f}$. 
We say that the faces $f$ and $g$ are adjacent each other if $m_{f,g}>0$; the integer value $m_{f,g}$ is called a multiplicity.  
The internal facial function $\gamma^{in}_f$ defined in Definition~\ref{def:facialfunction} is rewritten by 
\begin{equation}\label{eq:innerproduct}
\gamma_f^{in}(e) = 
\begin{cases}
1+\bs{1}_{\{\bar{e}_j\in[f]\}}(e_j)\;d & \text{: $e=e_j$ $(j=0,1,\dots,s-1)$,}\\
\bs{1}_{\{\bar{e}_j\notin[f]\}}(e_j)\; d &\text{: $e=\bar{e}_j$ $(j=0,1,\dots,s-1)$,}\\
b & \text{: $e=\xi_j$ $(j=0,1,\dots,s-1)$,} \\
0 & \text{: otherwise.}
\end{cases}
\end{equation}
For the set of facial walks $F=F^{in} \cup \{f_*\}$ with $F^{ex}=\{f_*\}$,  $|F|=p$, we set $F^{in}=\{f_1,\dots,f_{p-1}\}$ and 
consider the Gram matrix of $\{\gamma^{in}_{f}\}_{f\in F^{in}}$ defined by 
\[ (\mathcal{M})_{i,j} = \langle \gamma_{i},\gamma_{j}  \rangle \]
for any $i,j\in \{1,\dots,p-1\}$, where $\gamma_i=\gamma_{f_i}$ $(i=1,\dots,p-1)$. 
Using the expression for (\ref{eq:innerproduct}), let us see 
\[ (\mathcal{M})_{i,j} := 2d\;m_{i,j}+2|f_i|_G\;\delta_{i,j},  \]
where $m_{i,j}:=m_{f_{i},f_{j}}$. 
%under the Assumption~\ref{ass:face}. 
If $i\neq j$, %by the assumption
\[ \langle \gamma_i,\gamma_j \rangle=m_{i,j}\left\langle \begin{bmatrix}1 \\d \end{bmatrix},\;\begin{bmatrix}d \\1 \end{bmatrix}\right\rangle=m_{i,j} \times 2d. \]
On the other hand, if $i=j$, then putting $s=|f_j|$
\begin{align*}
||\gamma_i||^2 &= |b|^2 s 
+ \sum_{k=0}^{s-1}\left( 1+\bs{1}_{\bar{e}_k\in[f_j]}(e_k) d\right)^2 
+ \sum_{k=0}^{s-1}\left( d\;\bs{1}_{\bar{e}_k\notin[f_j]}(e_k) \right)^2 \\
&= |b|^2 s + (s+2d\;m_i+d^2m_i)+d^2(s-m_i) \\
&= 2|f_i|+2d \;m_i.
\end{align*}

To determine the coefficients $c_i:=c_{f_i}$  $(i=1,\dots,p-1)$ in (\ref{eq:st}), let us set the vector $\bs{c}:=[\;c_1,\dots,c_{p-1}\;]^\top$, $\bs{b}:=[\;\langle \gamma_1,\psi^{ex} \rangle,\dots,\langle \gamma_{p-1},\psi^{ex} \rangle\;]^\top$, where $\psi^{ex}:=\sum_{g\in F^{ex}}\gamma_g^{ex}=\gamma_{f_*}^{ex}$. Then by taking the inner product of $\gamma_{i}$ to both sides of (\ref{eq:st}), we have 
\begin{equation}\label{eq:c} 
\bs{c}=\mathcal{M}^{-1} \bs{b}, \end{equation}
because $\langle \psi_\infty,\gamma_f^{in}\rangle=0$ for any $f\in F^{in}$. 
Note that $\gamma_j$'s are linearly independent and $\mathcal{M}$ is its Gram matrix, then $\mathcal{M}^{-1}$ exists. 
%%%%%%%%%%%%%%%%
\subsubsection{Geometric representation}
Let $G^*$ be the dual graph of $G$ induced by the rotation $\rho$. Note that the dual graph $G^*$ depends on the rotation $\rho$, equivalently, the underlying embedding. 
Naturally, $G^*$ is a usual dual graph of $G$ which embedded into the suitable orientable surface with respect to $\rho$. 
It holds that for any $f_i\in V(G^*)$, 
\[ \deg_{G^*}(f_i)=|f_i|_G=\sum_{j}m_{i,j}. \]
From now on, we put together every multi-edge of $G^*$ as a simple edge and remove every self-loop from $G^*$ and add one self-loop again to every vertex of $G^*$ except the sink vertex $f_*$; such a resulting graph is denoted by $\SL{G^*}$.  
Then $\SL{G^*}$ is described by  
\begin{align*}
    V(\SL{G^*}) &= F^{in}(G)\cup F^{ex}(G); \\
    \{f,  g\}\in E(\SL{G^*}) & \Leftrightarrow ``m_{f,g}\neq 0 \text{ if $f\neq g$}" \text{ or  ``$f\neq f_*$ if $f=g$"}.
\end{align*}
%We put $u_i:=f_i$ if we discuss in the literature of $\SL{G^*}$. 
The self-loop which was newly added to a vertex in $V(G^*)\setminus\{f_*\}$ is called the potential self-loop.  
The set of all potential self-loops in $\SL{G^*}$ is denoted by $S(\SL{G^*})$. 
In this paper, we regard the self-loop as an odd cycle and the isolated vertex as a tree.  
%The first betti number of $b_1(\Gamma)$ is denoted by \[b_1(\Gamma)=|E(\Gamma)|-|V(\Gamma)|+1.\] 
%If $\Gamma$ has a unique cycle and this cycle length is odd, then $\Gamma$ is called the odd unicyclic graph.  
If a subgraph $H\subset \SL{G^*}$ has a unique potential self-loop, say $s$, and $H\setminus\{s\}$ is a tree, then $H$ is called a potential tree. 
%Note that the uni-self-looped tree on vertex $u$ is isomorphic to the rooted tree on vertex $u$. 

Now we introduce the following two important families of the spanning subgraphs of $\SL{G^*}$ whose connected components are constructed by trees and 
potential trees. 
%odd unicyclic graphs. 

\begin{definition}[The families of spanning subgraphs for describing the stationary state]
\begin{align*}
    \mathcal{H}(G;f_*) &:=\{ \text{$H$ is a spanning subgraph of $\SL{G^*}$ } \;|\; \\ 
    & \qquad\qquad\qquad\text{(i) the connected component of $H$ including $f_*$ is a tree;} \\
    & \qquad\qquad\qquad\text{(ii) the other connected components of $H$ are potential trees. }
    \}\\
    \mathcal{H}(G;f_*;f,g) &:=\{ 
    \text{$H$ is a spanning subgraphs of $\SL{G^*}$ } \;|\; \\ 
    & \qquad\qquad\qquad
    \text{(i)' there are exactly $2$ trees in the connected components of $H$, and } \\
    & \qquad\qquad\qquad\qquad\qquad
    \text{one includes $f_*$,}\\
    & \qquad\qquad\qquad\qquad\qquad
    \text{the other includes both $f$ and $g$;} \\
    & \qquad\qquad\qquad\text{(ii) the other connected components of $H$ are potential trees. }
    \}
\end{align*} 
for any $f,g\in F^{in}$. 
\end{definition}

The vertex set  $V(\stackrel{\circ}{G^*})$ is denoted by $\{u_1,\dots,u_{p-1},u_p\}$, where $u_i$ corresponds to $f_i\in F^{in}$ for $i=1,\dots,p-1$
 and $u_p$ corresponds to $f_*\in F^{ex}$. 
For an edge $e\in E(\SL{G^*})$ with its end vertices $u_i$ and $u_j$,  we set $m_e:= m_{i,j}$. 
The weight on $E(\stackrel{\circ}{G^*})$ is denoted by 
\begin{equation}\label{eq:weight}
w(e)=\begin{cases} -2d \; m_e & \text{: $e\in E(\stackrel{\circ}{G^*})\setminus S(\stackrel{\circ}{G^*})$,} \\
2(|f_i|_G+d\;\deg_{G^*} (o(e)))=2(d+1)\deg_{G^*}(u_i)  & \text{: $e\in S(\stackrel{\circ}{G^*})$,}
\end{cases}
\end{equation}
where $|f|_G=|f|$ is the length of the facial walk $f$ following the rotation $\rho$ in $G$, and $\deg_{G^*}(f):=\sum_{g}m_{f,g}$ is the degree in $G^*$ for any $f\in V(G^*)$. 
The following quantities induced by the above families of spanning subgraphs are important values to describe the stationary state. 
\begin{definition}[Weights of spanning subgraph and two families of spanning subraphs] 
\noindent \\
For $H\subset \stackrel{\circ}{G^*}$, the weight of $H$ is defined by  $\mathcal{H}(G;f_*;f,g)$: 
\[ W(H)=\prod_{e\in H}w(e),  \]
For the families of $\mathcal{H}(G;f_*)$ and $\mathcal{H}(G;f_*;f,g)$, the weights of $\mathcal{H}(G;f_*)$ and $\mathcal{H}(G;f_*;f,g)$ are defined by 
\begin{align*}
    \iota_1(G;f_*) &= \sum_{H\in \mathcal{H}(G;f_*)}W(H), \\
    \iota_2(G;f_*;f,g) &= \sum_{H\in \mathcal{H}(G;f_*;f,g)}W(H).
\end{align*} 
\end{definition}

Now we are ready to state the theorem for the stationary state. 
\begin{theorem}\label{thm:stationary}
Assume $\omega=1$, $d\in\mathbb{R}$ and $G$ has $p-1$ internal faces $f_1,\dots,f_{p-1}$ and the external face $f_*$. 
%Every tails is joined to every vertex of $f_p$.
The internal facial functions in Definition~\ref{def:facialfunction} are denoted by $\gamma_j$'s $(j=1,\dots,p-1)$ and the external facial function in Definition~\ref{def:facialfunction} is denoted by $\gamma^{ex}_*$. 
%\[ K_H=(2d)^{|F|-1-|S(H)|}(2(1-d))^{|S(H)|}\prod_{u\in S(H)}\deg_{G_*}(u). \]
Then the stationary state is  
\begin{equation}\label{c}
\psi_{\infty}=\left(1-\sum_{\ell,m=1}^{p-1}\frac{\iota_2(G;f_*;f_\ell,f_m)}{\iota_1(G;f_*)}\Gamma_{\ell,m}^{in}\right)\gamma^{ex}_*, 
\end{equation}
where $\Gamma_{\ell,m}^{in}:=\gamma_{\ell}^{in}\;{\gamma_{m}^{in}}^*$ $(m=1,\dots,p-1)$. 
\end{theorem}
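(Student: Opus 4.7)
\medskip
\noindent\textbf{Proof plan.} The plan is to combine Proposition~\ref{prop:support} with the linear system (\ref{eq:c}) in order to express the stationary state through the inverse Gram matrix $\mathcal{M}^{-1}$, then to recognise $\mathcal{M}$ as the sink-deleted weighted Laplacian of $\stackrel{\circ}{G^*}$, and finally to read off its entries via Cramer's rule combined with the matrix-tree theorem. Since the tail configuration forces $F^{ex}=\{f_*\}$, Proposition~\ref{prop:support} gives $\psi_\infty=\gamma^{ex}_*-\sum_{\ell=1}^{p-1}c_\ell\gamma^{in}_\ell$ with $\bs{c}=\mathcal{M}^{-1}\bs{b}$ and $b_m=\langle\gamma^{in}_m,\gamma^{ex}_*\rangle$ by (\ref{eq:c}). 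Expanding $c_\ell\gamma^{in}_\ell=\sum_m(\mathcal{M}^{-1})_{\ell,m}\gamma^{in}_\ell\langle\gamma^{in}_m,\gamma^{ex}_*\rangle=\sum_m(\mathcal{M}^{-1})_{\ell,m}\Gamma^{in}_{\ell,m}\gamma^{ex}_*$ shows that (\ref{c}) is equivalent to the entry-wise identity $(\mathcal{M}^{-1})_{\ell,m}=\iota_2(G;f_*;f_\ell,f_m)/\iota_1(G;f_*)$ for all $1\le\ell,m\le p-1$.

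Next I will identify $\mathcal{M}$ with the reduced weighted Laplacian $\tilde L$ of $\stackrel{\circ}{G^*}$ obtained by removing the row and column indexed by the sink $f_*$, relative to the edge weights of (\ref{eq:weight}). A direct entry-by-entry check does this: for $i\ne j$, the unique simple edge between $u_i$ and $u_j$ in $\stackrel{\circ}{G^*}$ carries weight $-2d\,m_{i,j}$, so $\tilde L_{i,j}=2d\,m_{i,j}=(\mathcal{M})_{i,j}$; and for the diagonal, summing the non-loop contribution $-2d\sum_{j\ne i}m_{i,j}$ with the potential-loop contribution $2(1+d)\deg_{G^*}(u_i)$ and using $\deg_{G^*}(u_i)=\sum_j m_{i,j}$ yields $2|f_i|_G+2d\,m_i=(\mathcal{M})_{i,i}$.

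Finally, with $\mathcal{M}=\tilde L$, the weighted matrix-tree theorem, applied in the form where each potential self-loop acts as an ``alternative root'' of its connected component, yields $\det\mathcal{M}=\iota_1(G;f_*)$, while its all-minors version yields $(-1)^{\ell+m}\det\mathcal{M}_{m,\ell}=\iota_2(G;f_*;f_\ell,f_m)$: the deletion of row $m$ and column $\ell$ encodes spanning configurations having two tree components, one containing the sink $f_*$ and the other containing both $f_\ell$ and $f_m$, with every remaining component a potential tree. Cramer's rule $(\mathcal{M}^{-1})_{\ell,m}=(-1)^{\ell+m}\det\mathcal{M}_{m,\ell}/\det\mathcal{M}$ then closes the argument, and substitution into the expression derived in the first paragraph yields (\ref{c}). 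I expect the main obstacle to be the matrix-tree identity in this exact ``self-loop as root'' form, together with the sign tracking in the all-minors version; I plan to handle both by attaching an auxiliary vertex joined by an edge to each potential self-loop and reducing the two determinantal identities to the classical rooted and all-minors matrix-tree theorems on the augmented graph via a Cauchy–Binet expansion of its signed incidence matrix.
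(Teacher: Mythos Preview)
Your proposal follows essentially the same route as the paper's own proof: reduce via Proposition~\ref{prop:support} and (\ref{eq:c}) to the identity $(\mathcal{M}^{-1})_{\ell,m}=\iota_2(G;f_*;f_\ell,f_m)/\iota_1(G;f_*)$, identify $\mathcal{M}$ with the $f_*$-reduced weighted Laplacian $BD_wB^*$ of $\stackrel{\circ}{G^*}$ (the paper writes this as $L+\mathcal{V}'$, which is the same thing), and then invoke a combinatorial expansion of the inverse. The only difference is that the paper outsources the last step to Theorem~3.1 of the companion paper \cite{HS_Resolvent}, whereas you plan to derive it from scratch via Cauchy--Binet and the all-minors matrix-tree theorem, using the auxiliary-vertex trick to convert potential self-loops into ordinary edges. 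That is a legitimate and standard way to recover exactly the cited result, so your argument is correct and self-contained where the paper's is not; conversely the paper's proof is one line at this point. Your own caveat about the sign bookkeeping in the non-principal minor case is well placed---it is the only genuinely delicate point---but the augmented-graph reduction you describe (identify $\star$ as a second sink, so that $\det\mathcal{M}$ becomes a principal $2$-minor and $(-1)^{\ell+m}\det\mathcal{M}_{m,\ell}$ a $3$-minor of the ordinary Laplacian) does resolve it cleanly via Chaiken's theorem.
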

\begin{proof}
Let $G^*=(V^*,E^*)$ be the dual graph of the original graph $(G;\delta V;\rho)$. 
Let us set $\vec{E^*}$ as the set of oriented arcs of $\SL{G^*}$ such that if $e\in \vec{E^*}$, then $\bar{e}\notin \vec{E^*}$ and $ \{|e| \;|\; a\in \vec{E} \}=E(\SL{G^*})$; we fix one direction in the $2^{|E(\SL{G^*})|}$ choices. 
%The set of the potential self-loops of $\SL{G^*}$ is denoted by $S\cong V_*\setminus\{f_p\}$. 
Let $B\in  \mathbb{C}^{(V^*\setminus\{u_p\}) \times \vec{E^*}}$ be one oriented incidence matrix such that
%\[ (B\psi)(u)=\sum_{a\in \vec{E*}: t(a)=u}\psi(a)-%\sum_{a\in \vec{E*}: o(a)=u}\psi(a)+\psi(u) \]
%for any $u\in V^*\setminus\{u_p\}$ and for any $\psi\in \mathbb{C}^{\vec{E^*}\cup S}$. 
%The matrix expression for $B={B[u,a]}_{u\in V,\;a\in \vec{E^*}\cup S}$ is 
\[ B[u,a]=\begin{cases} 1 & \text{: $a\in \vec{E^*}\setminus S$, $t(a)=u$,} \\
-1 & \text{: $a\in \vec{E^*}\setminus S$, $o(a)=u$,} \\
1 & \text{: $a\in S$, $t(a)=o(a)=u$, }\\
0 & \text{: otherwise}
\end{cases} \]
for any $u\in V$ and $a\in \vec{E^*}$. 
Let $D_w$ be the diagonal matrix of $\mathbb{C}^{(\vec{E^*}\cup S)\times (\vec{E^*}\cup S)}$ such that 
\[ D_w[a,b]=\delta_{a,b}\; w(a), \]
for any $a,b\in \vec{E^*}\cup S$. 
Then we can reexpress the Gram matrix $\mathcal{M}$ by 
\begin{align*}
\mathcal{M} &= BD_wB^*, 
\end{align*}
where, $B^*\in \mathbb{C}^{(\vec{E^*}\cup S)\times (V^*\setminus\{u_p\}) }$ is the adjoint of $B$. 
Then $\mathcal{M}$ can be also expressed by 
\[ \mathcal{M}=L+\mathcal{V}', \]
where $L\in \mathbb{C}^({V^*\setminus\{f_p\})\times (V^*\setminus\{f_p\})}$  is the weighted Laplacian such that
\[ L[f,g]=2d\times \begin{cases} -m_{f,g} & \text{: $f\neq g$, }\\
 \sum_{h\neq f}m_{f,h} & \text{: $f=g$ }
\end{cases} \]
and $\mathcal{V}'$ is the diagonal matrix described by 
\[ \mathcal{V}'[f,g]=\begin{cases}
2d\;m_f+2\sum_{h\in V^*\setminus\{f_p\}}m_{f,h} & \text{: $f=g$,}\\ 
0 & \text{: otherwise.}
\end{cases}\]
By applying Theorem~3.1 in \cite{HS_Resolvent} to $\mathcal{M}$, we have 
\begin{equation}\label{eq:inverse}
\mathcal{M}^{-1}[f,g]=\frac{\iota_2(G;f_*.f,g)}{\iota_1(G;f_*)}.
\end{equation}
Then  inserting (\ref{eq:c}) into 
(\ref{eq:st}) and using the expression for $\mathcal{M}^{-1}$ in (\ref{eq:inverse}), we have 
\begin{align*}
\psi_\infty &= 
\gamma_p^{ex}-\sum_{f\in F^{in}} \bs{c}[f] \;\gamma_f^{in} & (\text{by } (\ref{eq:c}))\\
&= \gamma_p^{ex}-\sum_{f\in F^{in}} (\mathcal{M}^{-1}\bs{b})[f] \gamma_f^{in} & (\text{by } (\ref{eq:st})) \\
&= \gamma_p^{ex}-\sum_{f,g\in F^{in}} \frac{\iota_2(\Gamma;f,g)}{\iota_1(\Gamma;f_*)} \Gamma_{f,g}^{in} \gamma_*^{ex}, &(\text{by } (\ref{eq:inverse})) 
\end{align*}
which implies the desired conclusion. 
\end{proof}

%%%%
\subsubsection{Example: computation of the stationary state on the tetrahedron}\label{sect:4.3.3}
Let us concretely compute the $c_f$'s in the case for the tetrahedron as the original graph with the vertex set $\{0,1,2,3\}$. The boundary set is $\{0,1,2\}$. Each rotation is assigned to each vertex clockwise.  
The resulting rotation tailed graph is $(G;\delta V;\rho)$. 
See Figure~\ref{fig:2}; the graph $(G;\delta V; \rho)$ and the dual graph $G^*$ are depicted. The graph $\SL{G^*}$ has self-loops on the black vertices in $G^*$.
Then in this case, the weight of the each arc of $E(\stackrel{\circ}{G^*})$ in (\ref{eq:weight}) is reduced to 
\[ w(e)=\begin{cases} -2d & \text{: $e\in E(\stackrel{\circ}{G^*})\setminus S(\SL{G^*})$,}\\
2(1+d)\deg_{G^*}(o(e))=6(1+d) & \text{: $e\in S(\SL{G^*})$.}
\end{cases}\]
Let us set $q:=6(1+d)$ and $p:=-2d$. 
The list of all spanning forests of the dual graph of $G^*$ is depicted in Figure~\ref{fig:3}.   
Figure~\ref{fig:4} depicts the list of the spanning forests of the dual graph of $G^*$ which induce the spanning subgraph in $\mathcal{H}(G;f_*;f_{i}^{in},f_{j}^{in})$. 
Each forest induces some spanning subgraph of $\mathcal{H}(G;f_*)$ and $\mathcal{H}(G;f_*;f_{i}^{in},f_{j}^{in})$.  In Figures~\ref{fig:3} and \ref{fig:4}, ``GRAPH$^{\times m}$" means that this graph induces $m$ kinds of potential trees in $\mathcal{H}(G;f_*)$ and $\mathcal{H}(G;f_*;f_i^{in},f_j^{in})$, respectively.
We omit ``$\times m$" when $m=1$. 

We divide each set of the spanning forests into the equivalent class following the relation: 
for any subgraph $H$ and $H'$,  $H\stackrel{W}{\sim} H'$ iff $W(H)=W(H')$. 
For example, for any spanning forest in the class ``$p^2q$",  the induced spanning subgraph gives the same weight $p^2q$ in Figure~\ref{fig:3}; the number of such induced subgraphs is $3\times 3+1\times 9+2\times 3=24$.   
Let us denote the number of connected component of the spanning forest $H$, as $\omega (H)$. Note that $\omega (H)$ determines the weight of $H$; 
\[W(H)=p^{|E(H)|-(\omega(H)-1)}q^{\omega(H)-1}. \]
We can compute $\iota_1$ and $\iota_2$ by using Figures~\ref{fig:3} and \ref{fig:4} as follows: 
\begin{align*}
\iota_1 &= 16p^3+24p^2q+9pq^2+q^3 = 8(d-3)^2(3+2d), \\
\iota_2(\ell,m) &= 
\begin{cases}
8p^2+6pq+q^2 = 4(9-d^2) & \text{: $\ell=m$,} \\
4p^2+pq=4d(d-3)  & \text{: $\ell\neq m$.}
\end{cases}
\end{align*}
Then 
\[ \mathcal{M}^{-1}[\ell,m]=\frac{\iota_2(\ell,m)}{\iota_1}=\frac{1}{2(d-3)(2d+3)}\begin{cases}-(d+3) & \text{: $\ell=m$,}\\ d & \text{: $\ell\neq m$.}\end{cases} \]
The scattering matrix is described by 
\[ S=\frac{\omega bc}{1-(a\omega)^3}\begin{bmatrix} (a\omega)^2 & (a\omega) & 1 \\
1 & (a\omega)^2 & (a\omega) \\
(a\omega) & 1 & (a\omega)^2 
\end{bmatrix}+dI_3 \]
for any $\omega$ with $|\omega|=1$. 
If the internal flow is set by $\bs{\alpha}_{in}=[1,1,1]^\top$, then the outflow $\bs{\beta}_{out}$ is 
\[ \bs{\beta}_{out}[j]=\frac{\omega bc}{1-a\omega}+d, \]
which implies 
\[ \eta_j=\frac{\omega^{-1}}{bc}\left(\bs{\beta}_{out}[j+1]-d\bs{\alpha}_{in}[j+1]\right)\eta=\frac{1}{1-a\omega} \]
for $j=0,1,2$. 
The external facial function for the internal facial walk $(e_0,\xi_0,e_1,\xi_1,e_2,\xi_2)$, where $\xi_j=(\xi_j^{out},\xi_j^{in})$ ($j=0,1,2$), can be expressed by 
\begin{align}
\gamma_{f_*}^{ex}(\xi_0^{in}) &= b\eta,\; 
\gamma_{f_*}^{ex}(e_1) = \omega \eta, \;
\gamma_{f_*}^{ex}(\bar{e}_1) = db\omega \eta,\; 
\gamma_{f_*}^{ex}(\xi_1^{out}) = b\omega \eta, \notag\\
\gamma_{f_*}^{ex}(\xi_2^{in}) &= b\eta,\; 
\gamma_{f_*}^{ex}(e_2) = \omega \eta, \;
\gamma_{f_*}^{ex}(\bar{e}_2) = db\omega \eta,\; 
\gamma_{f_*}^{ex}(\xi_2^{out}) = b\omega \eta, \notag \\
\gamma_{f_*}^{ex}(\xi_2^{in}) &= b\eta, \;
\gamma_{f_*}^{ex}(e_0) = \omega \eta, \;
\gamma_{f_*}^{ex}(\bar{e}_0) = db\omega \eta,\; 
\gamma_{f_*}^{ex}(\xi_0^{out}) = b\omega \eta. \label{eq:external0}
\end{align}
On the other hand, the internal facial function of the walk $f_j=(e_0',\xi_0',e_1',\xi_1',e_2',\xi_2')$, where $e_0'= \bar{e}_j$, is expressed by 
\begin{align}
\gamma^{in}_f(e_0') &= 1,\;
\gamma^{in}_f(\bar{e}_0') = d,\;
\gamma^{in}_f(\xi_0') = b, \notag \\
\gamma^{in}_f(e_1') &= \omega,\;
\gamma^{in}_f(\bar{e}_1') = d \omega,\;
\gamma^{in}_f(\xi_1') = b \omega, \notag\\
\gamma^{in}_f(e_2') &= \omega^2,\;
\gamma^{in}_f(\bar{e}_2') = d \omega^2,\;
\gamma^{in}_f(\xi_2') = b \omega^2 \label{eq:internal0}
\end{align}
for any $|\omega|=1$. 
Then if $\omega=1$, 
\begin{align*}
\bs{b}=[\;\langle \gamma_0^{in},\gamma^{ex}_{f_*} \rangle,\;\langle \gamma_1^{in},\gamma^{ex}_{f_*} \rangle,\;\langle \gamma_2^{in},\gamma^{ex}_{f_*} \rangle \;]^\top 
= d(b+1)\;\eta\;[\; 1,1,1 \;]^\top. 
\end{align*}
Since $\bs{c}=\mathcal{M}^{-1}\bs{b}$ for $\omega=1$, we obtain the coefficients $c_j$ $j=0,1,2$ as follows: 
\begin{equation}\label{eq:cjs}
c_j=\bs{c}[j]=\nu:=\sum_{i=0}^3\mathcal{M}[j,i]\bs{b}[i]=d(b+1) \eta \frac{1}{2(2d+3)}=
\frac{d (b+1) }{1-a}  \frac{1}{2(2d+3)}.
\end{equation}
Since $c_{f}=\sum_{i}\frac{\iota_2(G;f_*;f,g)}{\iota_1(G;f_*)}\langle \gamma_g^{in},\psi^{ex}\rangle$ for any $f\in F^{in}$, 
then inserting (\ref{eq:external0}), (\ref{eq:internal0}) and (\ref{eq:cjs}) into (\ref{eq:c}) in  Theorem~\ref{thm:stationary}, we obtain the stationary state as follows. 
Let the external facial walk be $(\xi_0^{in},e_0,\xi_1^{out},\xi_1^{in},e_1,\xi_2^{out},\xi_2^{in},e_2,\xi_0^{out})$, where $(\xi_k^{out},\xi_k^{in})\in \delta A_{qy}$ and $e_k\in A_{br}$. 
Labeling the arcs of each internal facial walk $(\xi_{0,k},e_{0,k},\xi_{1,k},e_{1,k},\xi_{2,k},e_{2,k})$, where $\xi_{j,k}\in A_{is}$ and $e_{j,k}\in A_{br}$ $(j=0,1,2)$, satisfying that $e_k=\bar{e}_{0,k}$ $(k\in \{0,1,2\})$, we have 
\begin{align*}
\phi_\infty(\xi_k^{in})&= b\eta,\;
\phi_\infty(e_k)= \phi_\infty(\bar{e}_{0,k})=\eta -\nu d,\; 
\phi_\infty(\xi_{k+1}^{out}) = b \eta, \\
\\
\phi_\infty(\xi_{0,k})&= -\nu b,\; 
\phi_\infty({e}_{0,k})=d \eta -\nu,\;  
\phi_\infty(\xi_{1,k})= -\nu b \\
\phi_\infty(e_{1,k})&= -\nu (1+d),\; 
\phi_\infty(\xi_{2,k}) = -\nu b,\; 
\phi_\infty(e_{2,k}) = -\nu (1+d)
\end{align*}
for $k=0,1,2$.
See Figure~\ref{fig:closedsurface}.

\begin{figure}[hbtp]
    \centering
    \includegraphics[keepaspectratio, width=130mm]{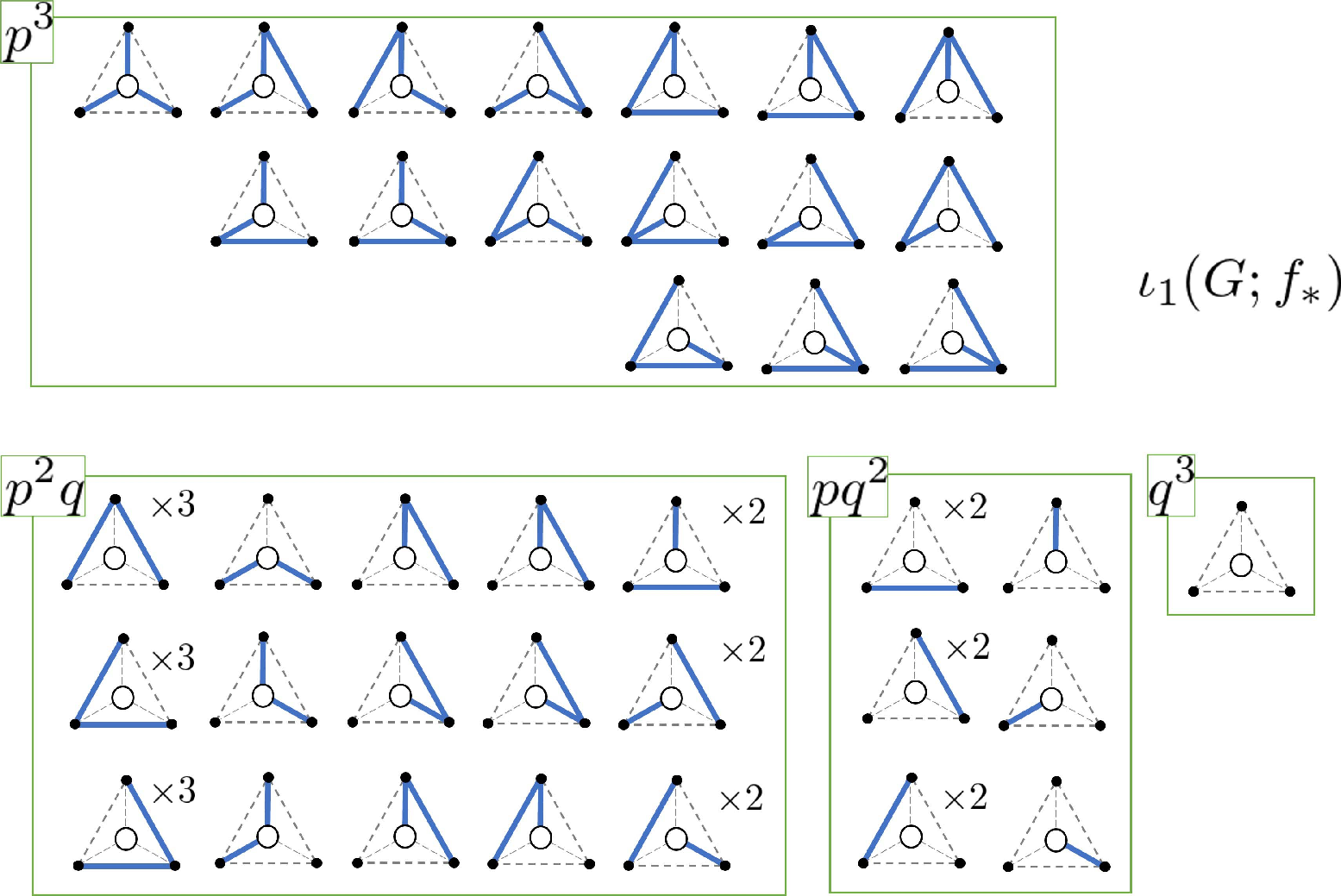}
    \caption{The list of the spanning forests of $G^*$: 
    The white vertex depicts $u_p$ corresponding to $f_*\in F^{ex}$. The weight on the edge is $p$ while the weight on the potential self loop is $q$.  
    ``GRAPH"$^{\times m}$ means that ``GRAPH" has $m$ possibilities to become a potential tree by adding  a potential self loop. For example, the graph of the upper left corner for ``$p^2q$", this graph induces the $3$ kinds of  potential trees by adding the self loop to each vertex except $f_*$. Then the coefficients of $p^3$, $p^2q$, $pq^2$ and $q^3$ for $\iota_1(G;f_*)$ are 
    $1\times 16$, 
    $3\times 3+2\times 3+1\times 9=24$, 
    $2\times 3+1\times 3=9$ 
    and 
    $1$, respectively. 
    }
    \label{fig:3}
\end{figure}
\begin{figure}[hbtp]
    \centering
    \includegraphics[keepaspectratio, width=150mm]{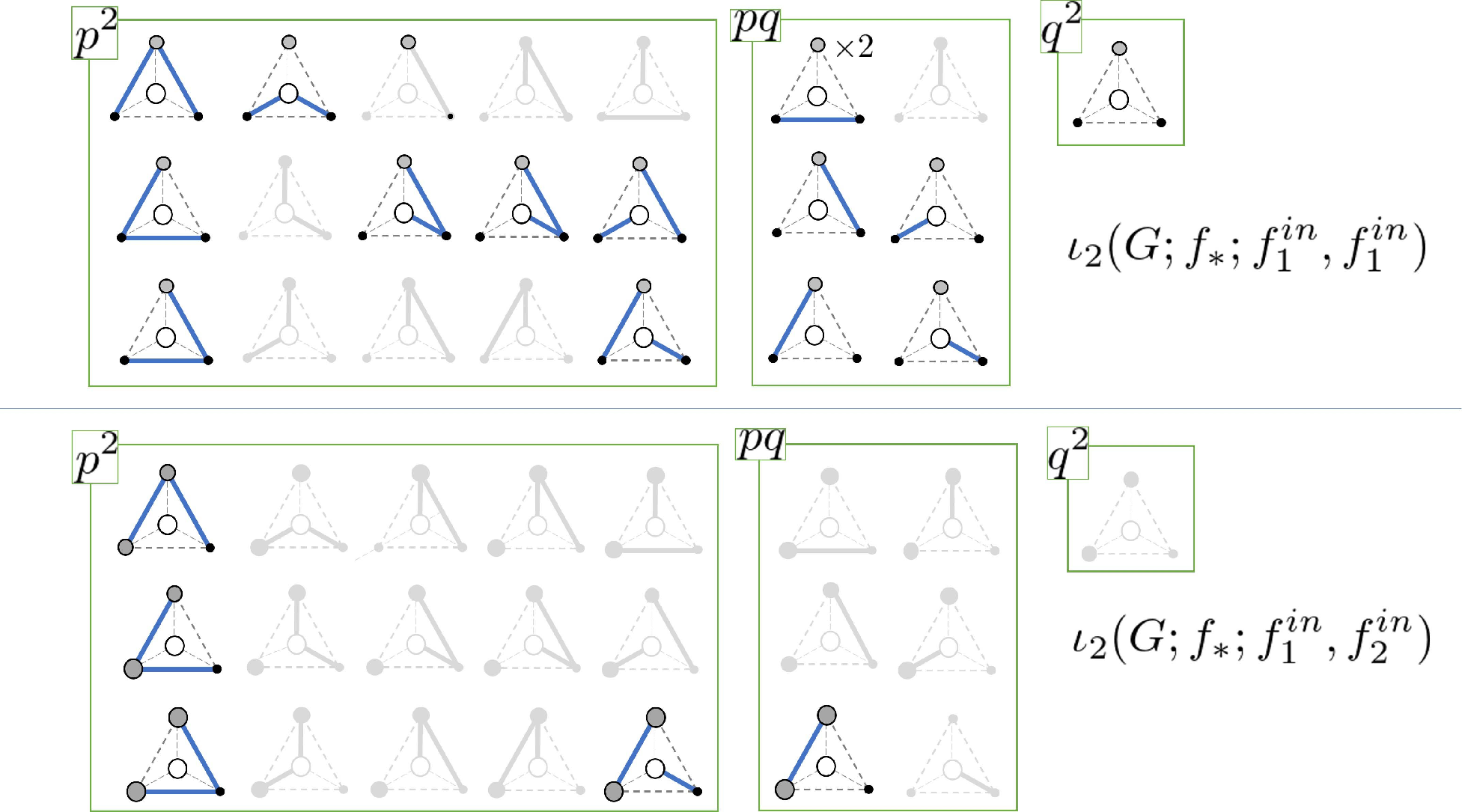}
    \caption{The list of the spanning forests of $G^*$ which induces $\mathcal{H}(G;\delta V;f_i^{in},f_j^{in})$}
    \label{fig:4}
\end{figure}

\begin{figure}[hbtp]    \centering    \includegraphics[keepaspectratio, width=130mm]{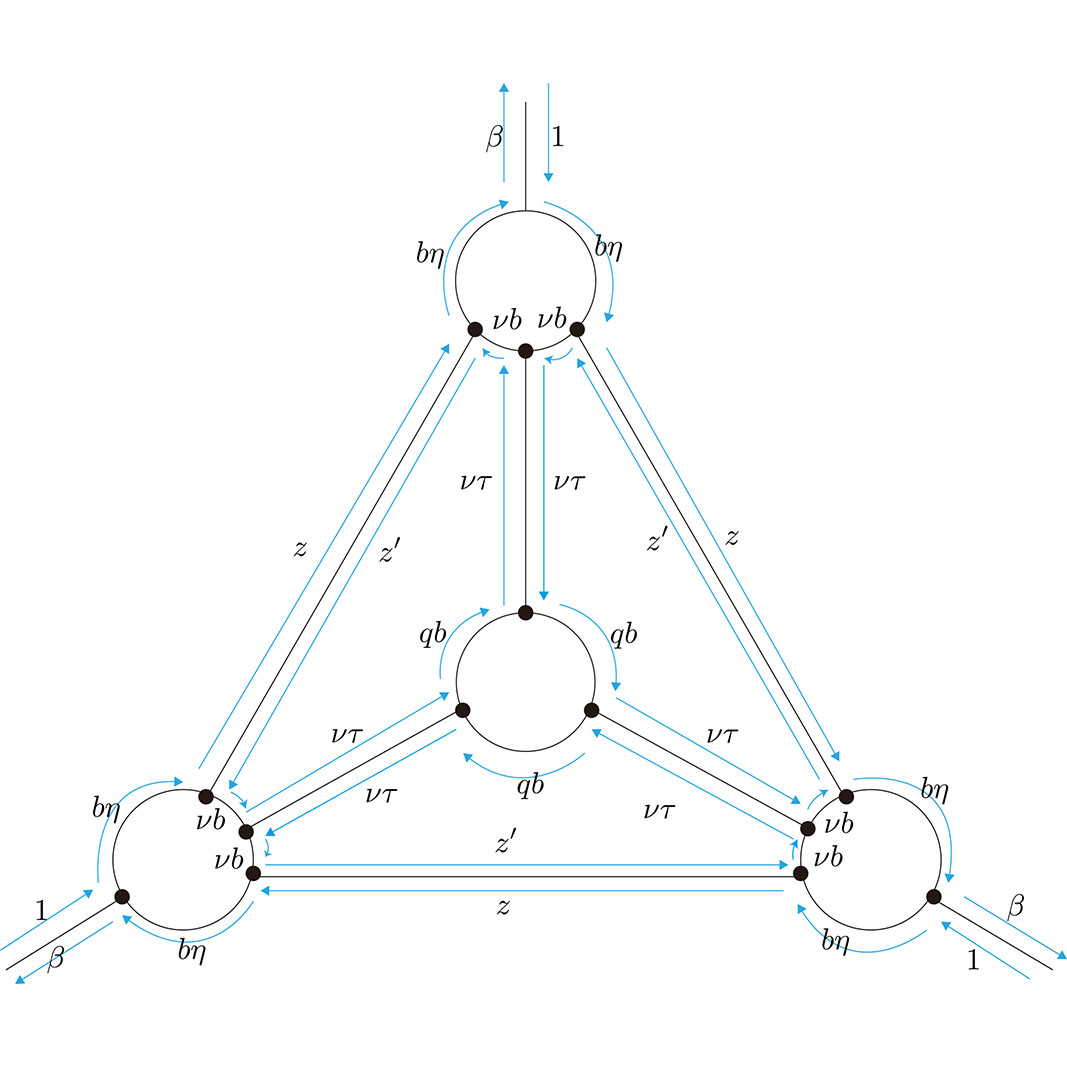}    \caption{The stationary state on the tetrahedron: Each value on each arc represents the amplitude of the stationary state at this arc, where $z=\eta-\nu d$, $z'=d\eta-\nu d$, $\tau=1+d$.}    \label{fig:closedsurface}\end{figure}

%\appendix
%\section{Proof of Lemma~\ref{lem:HScor}}
\noindent\\
\noindent {\bf Acknowledgments}
Yu.H. acknowledges financial supports from the Grant-in-Aid of
Scientific Research (C) Japan Society for the Promotion of Science (Grant No.~18K03401, No.~23K03203). 
E.S. acknowledges financial supports from the Grant-in-Aid of
Scientific Research (C) Japan Society for the Promotion of Science (Grant No.~19K03616) and Research Origin for Dressed Photon.

%\appendix
%\def\thesection{Appendix \Alph{section}}
%\renewcommand{\theequation}{A.\arabic{equation}}
%\setcounter{equation}{0}

%\section{}

\begin{small}
\bibliographystyle{jplain}

\end{small}

\end{document}